\documentclass[twoside]{amsart}
\usepackage{latexsym}
\usepackage{amssymb,amsmath,amsopn}
\usepackage[dvips]{graphicx}   
\usepackage{color,epsfig}      

               {\begin{list}{}{\leftmargin#1\rightmargin#2}\item{}}%
               {\end{list}}
 
\newtheorem{thm}{Theorem}
\newtheorem{lem}{Lemma}
\newtheorem{prop}{Proposition}
\theoremstyle{definition}
\newtheorem{defn}{Definition}

\newtheorem{rem}{Remark}

\newtheorem{cor}{Corollary}

\def\bea{\begin{eqnarray}}
\def\eea{\end{eqnarray}}

\usepackage[normalem]{ulem}

\author[G\'abor Domokos]{G\'abor Domokos}
\address{G\'abor Domokos, HUN-REN-BME Morphodynamics Research Group and Dept. of Morphology and Geometric Modeling, Budapest University of Technology,
M\H uegyetem rakpart 1-3., Budapest, Hungary, 1111}
\email{domokos@iit.bme.hu}

\begin{document}
\title[Soft cells, Tubular Tilings and the Hidden Phases in Binary Mixtures]{Soft cells, Tubular Tilings and the Hidden Phases in Binary Mixtures}

\subjclass{00A69, 52C22, 05B45, 54H99}
\keywords{Binary mixture, tessellation, triply periodic minimal surface, soft cell}

\begin{abstract}
Biological and physical systems ranging from Fermi surfaces and skeletal structures to reaction--diffusion patterns and cosmological models may be viewed as binary mixtures in which a smooth interface separates two complementary phases. While the interface is often directly observable, the topology of one of the phases may remain hidden. To study such systems, we introduce tubular tilings, a geometric framework for discretizing binary mixtures on smooth manifolds of arbitrary dimension and topology. We prove that tubular tilings satisfy global Euler balance laws relating the topology of the ambient manifold, the discretized phases, and their interfaces. These balance laws provide a practical inference principle: topological information about a hidden phase can be recovered from the observable phase and the geometry of the separating interface. We further show that, in dimensions $d>2$, tubular tilings form a subclass of soft tilings, the recently discovered class of corner-free tessellations. Applications to Fermi surfaces and cosmological shell decompositions illustrate how the theory can be used to extract otherwise inaccessible topological information about complex geometric structures.
\end{abstract}
\maketitle

\section{Introduction}

Tilings of space are among the most striking geometric patterns in nature and art.
In mathematical language, they are coverings of a smooth, boundaryless manifold $M^d$ by compact sets (tiles) without gaps or overlaps.
Beyond their role as models of physical and biological structures, tilings also provide an intuitive bridge between continuous space and discrete description.

Many natural systems may be interpreted as binary mixtures: the ambient manifold is partitioned into two complementary phases separated by a smooth interface. Examples range from biological tissues and porous materials to reaction--diffusion patterns, triply periodic minimal surfaces (TPMS) and cosmological decompositions of space. In many applications the separating interface is directly observed, whereas only one of the two complementary phases is naturally interpreted as the physical object. Throughout the paper we refer to the other complementary phase as the hidden phase, although the interface determines both complementary phases equally. While the interface itself is often the primary object of study, many applications require a finite geometric representation of the two phases. A prominent example is provided by TPMS, whose complementary labyrinths admit natural discretizations into unit cells. This motivates the search for natural discretizations of binary mixtures.

In this paper we introduce such discretizations, which we call tubular tilings, and derive topological balance laws governing their structure. These balance laws allow the topology of the hidden phase to be inferred from the observed phase together with the geometry of the separating interface. The resulting framework connects smooth interfaces, soft tilings and topological invariants of discretized binary mixtures.

\subsection{Tubular tilings}

The motivation for tubular tilings originates from the observation that many binary mixtures naturally admit finite geometric representations.  Discretizations analogous to TPMS structures appear in cellular materials, porous media, biological tissues and reaction--diffusion systems.
To capture these structures in a unified setting, we introduce \emph{tubular tilings}.

Informally, one starts with a tiling $\mathcal T$ and assigns each tile one of two labels, $A$ or $B$, producing two complementary spatial components.
Interfaces shared by tiles within the same component form the \emph{internal interface} of a tile, whereas interfaces shared with tiles of the opposite component form its \emph{external interface}.

A tiling is called \emph{tubular} if there exists a smooth embedded hypersurface $T$ such that the external interfaces collectively tile $T$
(tiling condition), while the internal interfaces of tiles remain disjoint unions of faces (tubularity condition).
In this case, we call $T$ a \emph{tubular (hyper)surface}.
A rigorous definition is given in Appendix~\ref{sec:basic} (Definition~\ref{defn:tubulartilings}).

Figure~\ref{fig:tub} illustrates the concept. Figure~\ref{fig:tub}(a) shows the frontal view of a brick wall interpreted as a tubular tiling, while Figure~\ref{fig:tub}(b) shows the analogous construction for the Schwarz~D surface. Although geometrically very different, both examples share the same topological structure: a smooth interface separating two complementary phases whose discretization defines a tubular tiling.

\begin{figure}
\centering
\includegraphics[width=\linewidth]{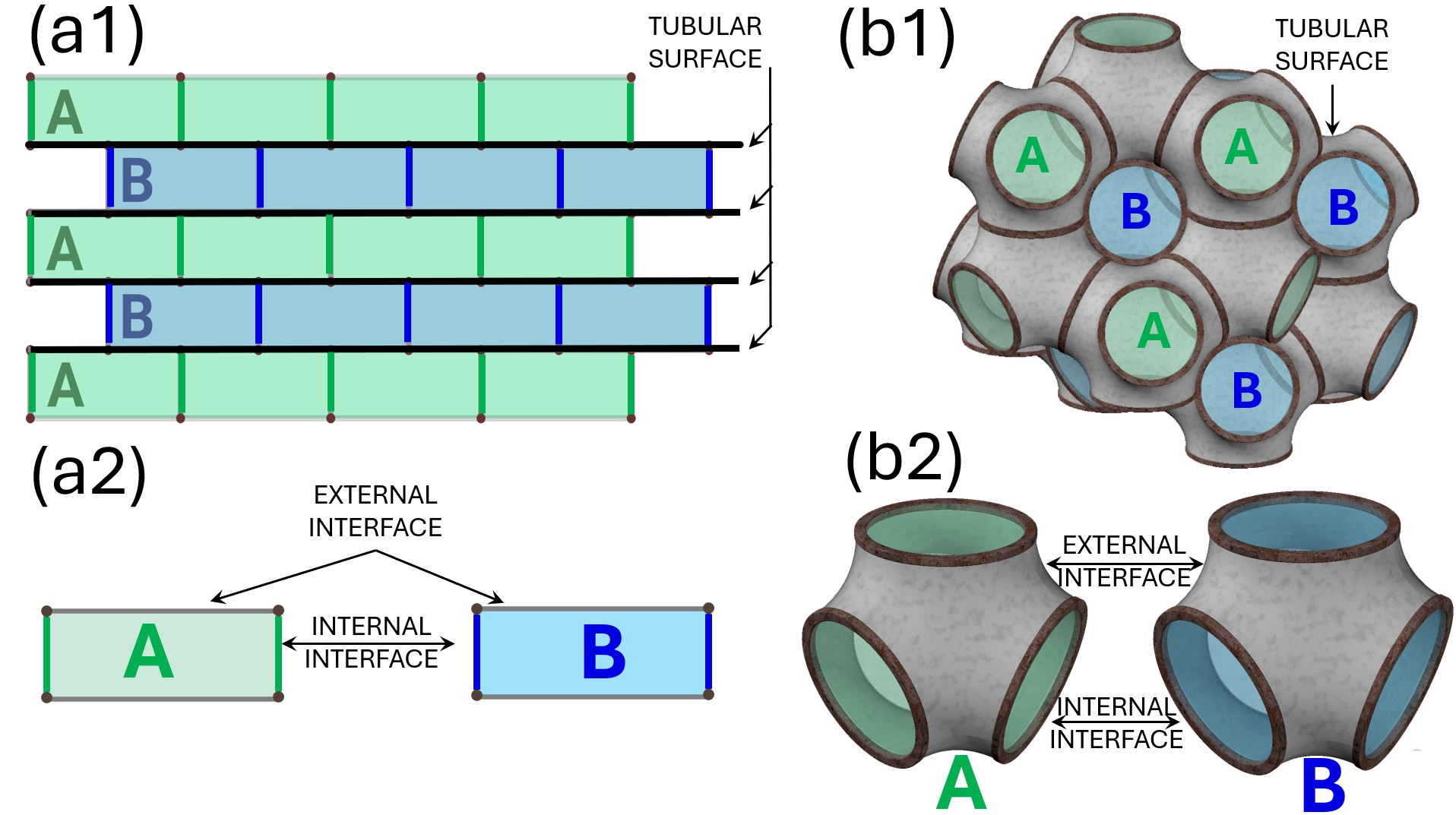}
\caption{Tubular tilings: basic concepts. (a1) Frontal view of brick wall laid in bond. Tubular domains $A$ and $B$ correspond to alternating layers. Tubular surface $T$ is the union of horizontal lines separating layers. Tubular tiles are rectangular views of bricks. (a2) External ($A/B$) and internal ($A/A$, $B/B$) interfaces on brick wall. (b1) Schwarz D surface interpreted as tubular tiling. Tubular surface (grey), $A$ and $B$ tiles indicated by green and blue semi-transparent interiors. (b2) External ($A/B$) interfaces (grey) and internal interfaces ($A/A$ green, $B/B$ blue) on tubular tiles of the Schwarz D surface.}
\label{fig:tub}
\end{figure}

Tubular tilings occur naturally in a wide range of systems. Examples on two-dimensional manifolds are shown in Figure~\ref{fig:2Dpatterns}, while Figure~\ref{fig:3Dpatterns} illustrates examples in three dimensions, including cellular structures, reaction--diffusion patterns, porous materials and cosmological decompositions.
\begin{figure}[ht!]
    \centering
    \includegraphics[width=\linewidth]{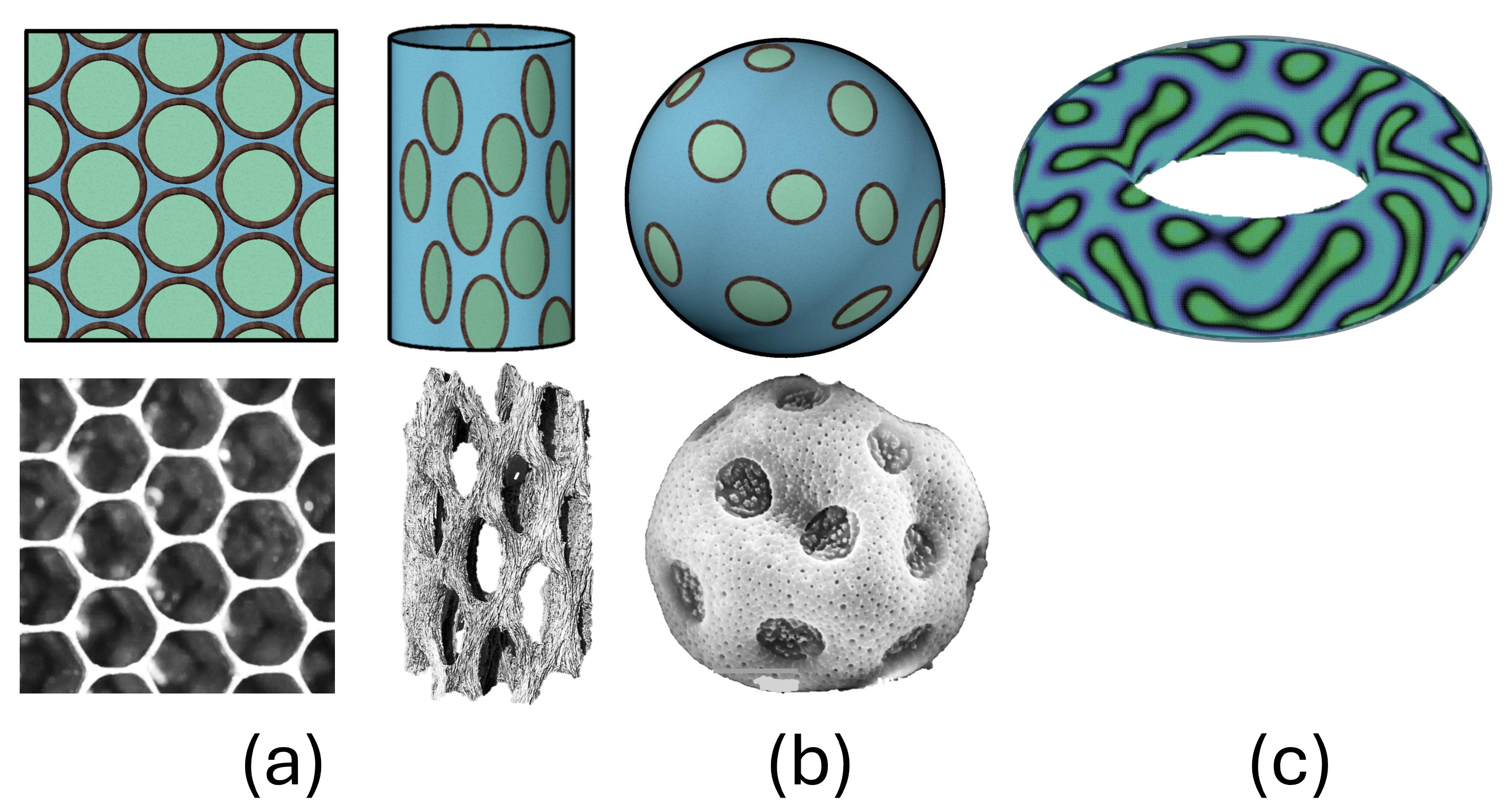}
    \caption{Examples of tubular tilings on two dimensional manifolds. (a) $M^d=T^2$ (flat torus): the honeycomb and $M^d=S^1\times \mathbb I^1$ (periodic cylinder): cactus skeleton (b) $M^d=S^2$ (sphere): pollen. (c) $M^d=T^2$ (non-flat torus): Turing patterns computed on a torus.  }
    \label{fig:2Dpatterns}
\end{figure}
\begin{figure}[ht!]
    \centering
    \includegraphics[width=\linewidth]{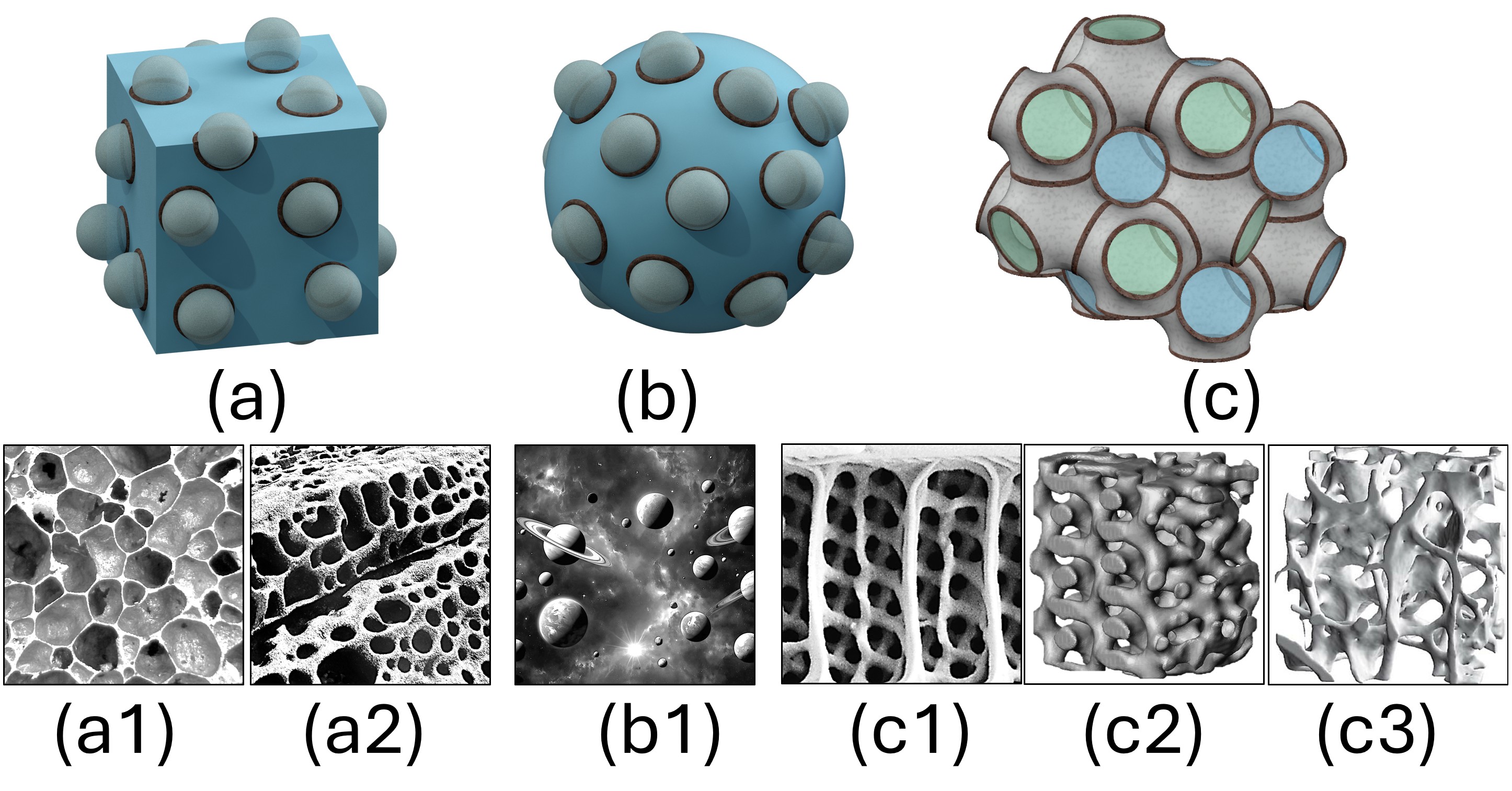}
    \caption{Examples of tubular tilings on three dimensional manifolds. (a) $M^d=T^3$ (flat torus): (a1) metal foam (a2) tafoni rock  (b) $M^d=S^3$ (b1) Thick shell decomposition of the Robertson-Walker universe. Tubular surface is the union of planetary surfaces. (c) $M^d=R^3$ (Euclidean space): (c1) butterfly wing/gyroid structure (c2) diblock copolymer (c3) human bone structure }
    \label{fig:3Dpatterns}
\end{figure}

The same geometric structure that underlies tubular tilings also endows them with remarkable regularity properties, making it possible to derive global Euler balance laws and relate them to soft tilings.

\subsection{The main result}

Tubular tilings provide a framework for studying the topology of discretized binary mixtures on smooth, boundaryless manifolds of finite topological type.
Under the tiling and tubularity conditions, the topology of the ambient manifold constrains the topology of the discretization.

Our main result, proved in Appendix \ref{sec:thm_euler}, is an Euler-type balance law relating the topology of tubular cells, their internal interfaces and the ambient manifold.

\begin{thm}\label{thm:eulerD}
Let $M^d$ be a smooth $d$--manifold without boundary and of finite
topological type and let $\mathcal{T}$ be a tubular tiling of $M^d$,
consisting of $A$--tiles and $B$--tiless.

Assume that the average Euler characteristics
$\chi_{\mathbf A},\chi_{\mathbf B}$ of the $A$-- and $B$--tiles,
respectively, and the average Euler characteristics
$\chi_{\bar A},\chi_{\bar B}$ of their corresponding internal
interfaces exist. Let $N$ denote the total number of tiles and assume that
the respective relative frequencies $p_A,p_B$ of $A$-- and $B$--tiles also exist.
Then
\begin{equation}\label{eq:thm0D}
p_A(2\chi_{\mathbf A}-\chi_{\bar A})+
(-1)^d p_B(2\chi_{\mathbf B}-\chi_{\bar B})=
\frac{\chi(M^d)}{N}
\bigl((-1)^d+1\bigr).
\end{equation}
\end{thm}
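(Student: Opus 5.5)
The plan is to pass from tiles to the two \emph{phase domains}. Let $M_A$ be the union of all $A$--tiles and $M_B$ the union of all $B$--tiles. By the tiling condition, $M_A\cap M_B=T$ is the tubular hypersurface, and $M_A$, $M_B$ are compact $d$--manifolds with common boundary $T$. (In the non-compact case of finite topological type, I would first work on a compact fundamental domain or exhaustion and then pass to averages.) The proof then has three steps.

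\textbf{Step 1 (Euler characteristic of a phase from its tiles).} I would compute $\chi(M_A)$ by inclusion--exclusion (Mayer--Vietoris) over the $A$--tiles. Two $A$--tiles meet only along internal interfaces. The tubularity condition says that the internal interface $\bar A_i$ of tile $i$ is a disjoint union of faces, and each such face is shared by exactly two $A$--tiles. Hence every face is counted twice in $\sum_i \chi(\bar A_i)$, and
\[
\chi(M_A)=\sum_{i\in A}\chi(\mathbf A_i)-\tfrac12\sum_{i\in A}\chi(\bar A_i)=\tfrac{N p_A}{2}\bigl(2\chi_{\mathbf A}-\chi_{\bar A}\bigr).
\]
The same computation gives $\chi(M_B)=\tfrac{N p_B}{2}(2\chi_{\mathbf B}-\chi_{\bar B})$.

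\textbf{Step 2 (global topology).} Mayer--Vietoris for $M=M_A\cup M_B$ with $M_A\cap M_B=T$ gives
\[
\chi(M^d)=\chi(M_A)+\chi(M_B)-\chi(T).
\]
I would combine this with the standard doubling identity for a compact manifold $W$ with boundary, $\chi(\partial W)=\bigl(1-(-1)^d\bigr)\chi(W)$, applied to $W=M_A$ and $W=M_B$.

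\textbf{Step 3 (parity split).}
\begin{itemize}
\item If $d$ is even, $T$ is a closed odd-dimensional manifold, so $\chi(T)=0$. Then $\chi(M_A)+\chi(M_B)=\chi(M^d)$. Multiplying by $2/N$ gives the left-hand side with $(-1)^d=1$, equal to $2\chi(M^d)/N$, which is the claim.
\item If $d$ is odd, the doubling identity gives $2\chi(M_A)=\chi(T)=2\chi(M_B)$. Hence $p_A(2\chi_{\mathbf A}-\chi_{\bar A})-p_B(2\chi_{\mathbf B}-\chi_{\bar B})=\tfrac{2}{N}\bigl(\chi(M_A)-\chi(M_B)\bigr)=0$, which matches the right-hand side since $(-1)^d+1=0$. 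This is consistent with $\chi(M^d)=0$ for odd $d$.
\end{itemize}

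\textbf{Main obstacle.} The hardest step is justifying the inclusion--exclusion in Step 1. One must show that higher-order intersections of $A$--tiles contribute nothing beyond the pairwise face terms. Concretely, the tubularity condition (faces of $\bar A_i$ disjoint) has to rule out junctions where three or more $A$--tiles meet in the interior of $M_A$. Any such junctions must lie on $T$, where they are absorbed into the external interfaces. I would attempt this by showing that the nerve of the cover of $M_A$ by $A$--tiles is a graph, so only pairwise terms survive. I would also check that each face, and every intersection used, is a compact set with a well-defined Euler characteristic (e.g. via a triangulation compatible with the tiling). A secondary technical point is making the passage to average Euler characteristics rigorous when $M^d$ is non-compact. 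There I would use a periodic or exhausting sequence of compact regions and show that boundary effects are $o(N)$.
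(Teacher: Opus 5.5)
Your proof is correct and takes a genuinely different, shorter route than the paper's. Steps~1 and~2 coincide with the paper's proof. It also starts from $\chi(\mathcal B_R)=\chi(A_R)+\chi(B_R)-\chi(T_R)$ and obtains $\chi(A_R)=N_A(\chi_{\mathbf A}-\tfrac12\chi_{\bar A})$, up to boundary errors, by inclusion--exclusion over tiles glued along internal interfaces.

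The difference lies in how the $\chi(T)$ term is eliminated. The paper recomputes $\chi(T_R)$ from tile data twice: once from the $A$-side tiling of $T$ by external interfaces $\hat\partial A_i$, glued along separation boundaries $\mathring\partial A_i$, and once from the $B$-side. It then uses the cell-boundary decomposition $\chi_{\hat A}=\chi_A-\chi_{\bar A}+\chi_{\mathring A}$, adds and subtracts the two resulting identities, and reduces them with parity facts for each individual tile. These are $\chi(\partial A_i)=2\chi(A_i)$ and $\chi(\mathring\partial A_i)=0$ for odd $d$, and $\chi(\partial A_i)=0$ and $\chi(\bar\partial A_i)=\tfrac12\chi(\mathring\partial A_i)$ for even $d$.

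You apply the parity fact once, globally, to the phase domains via $\chi(\partial W)=(1-(-1)^d)\chi(W)$, so external interfaces and separation boundaries never appear. This also exposes what the theorem says. Its left-hand side is $\tfrac{2}{N}\bigl(\chi(M_A)+(-1)^d\chi(M_B)\bigr)$, so the statement reduces to $\chi(M_A)+\chi(M_B)=\chi(M^d)$ for even $d$ and $\chi(M_A)=\chi(M_B)=\tfrac12\chi(T)$ for odd $d$.

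What the paper's tile-level bookkeeping buys is in the noncompact, averaged setting. There every parity identity is exact tile by tile, and only the gluing steps carry boundary errors. Your doubling identity would instead have to be applied to truncated phase regions, whose boundaries contain artificial pieces (cut surfaces or exposed internal faces). You would then need a separate argument that these contribute $o(N(R))$, or you would fall back on per-tile identities, which is the paper's route. Both arguments ultimately rest on the boundary layer being negligible, which the paper justifies by $R^{d-1}$ versus $R^d$ growth.

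Your ``compact fundamental domain'' idea needs periodicity, which the theorem does not assume. Where it applies, though, it is cleaner than either limiting argument. Passing to a sufficiently large quotient torus lets your compact argument run verbatim, and this covers all the $\mathbb R^3$ examples in the paper.

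The obstacle you flag in Step~1 is not one. Condition (\ref{eq:triple}) forbids $A_i\cap A_j\cap A_k\neq\emptyset$ everywhere, including on $T$, so the nerve of the cover of $M_A$ by $A$-tiles is a graph by definition. Your fallback claim that triple junctions on $T$ would be ``absorbed into the external interfaces'' would in fact be false, since such a junction would add its own inclusion--exclusion term. The definition simply rules it out.
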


We immediately observe that the right-hand side of (\ref{eq:thm0D}) vanishes whenever either $M^d$ is noncompact (in which case $N$ is infinite) or $d$ is odd.
In these cases the balance law reduces to a homogeneous relation.

Theorem~\ref{thm:eulerD} also recovers balance relations for classical convex mosaics as a special case.
In particular, for a normal, balanced convex tiling of $\mathbb R^3$, let $F$ and $V$ denote the average numbers of faces and vertices of cells, and let $f$ and $v$ denote the corresponding quantities for vertex polyhedra.
We prove (Corollary~\ref{cor:tub} in Section \ref{sec:app}) that $$
V(2-v)-f(2-F)=0.$$

Beyond its geometric consequences, Theorem~\ref{thm:eulerD} provides a topological tool for studying binary mixtures through their discretizations.
In many physical systems one phase is directly observable, whereas the complementary phase is accessible only indirectly or incompletely. We will call the related
tubular tiling in which one phase is observable and the complementary phase is hidden a \emph{semi-hidden} tubular tiling.
The theorem enables one to infer topological information
about a hidden phase from
geometric data associated with the observable phase and
their common interface.

\subsection{Relation to soft tilings}\label{ss:isoft}

The concept of tubular tilings grew out of the theory of
\emph{soft tilings} introduced in \cite{softcells1}. Soft tilings
were originally motivated by the study of space-filling structures
with a minimal number of sharp corners and by geometric models of
natural forms such as the chambered shell of the Nautilus
\cite{softcells1,Galo2024}, biological tissues
\cite{theis2025cellmet} and corals \cite{yuval2025}.
The Edge Bending (EB) algorithm of \cite{softcells1, ambrus} generates such
tilings from convex mosaics, while its extension in \cite{softcells2}
revealed a broader family of soft cells, including the unit cells of
classical TPMS (Figure~\ref{fig:0}).
\begin{figure}
\centering
\includegraphics[width=\linewidth]{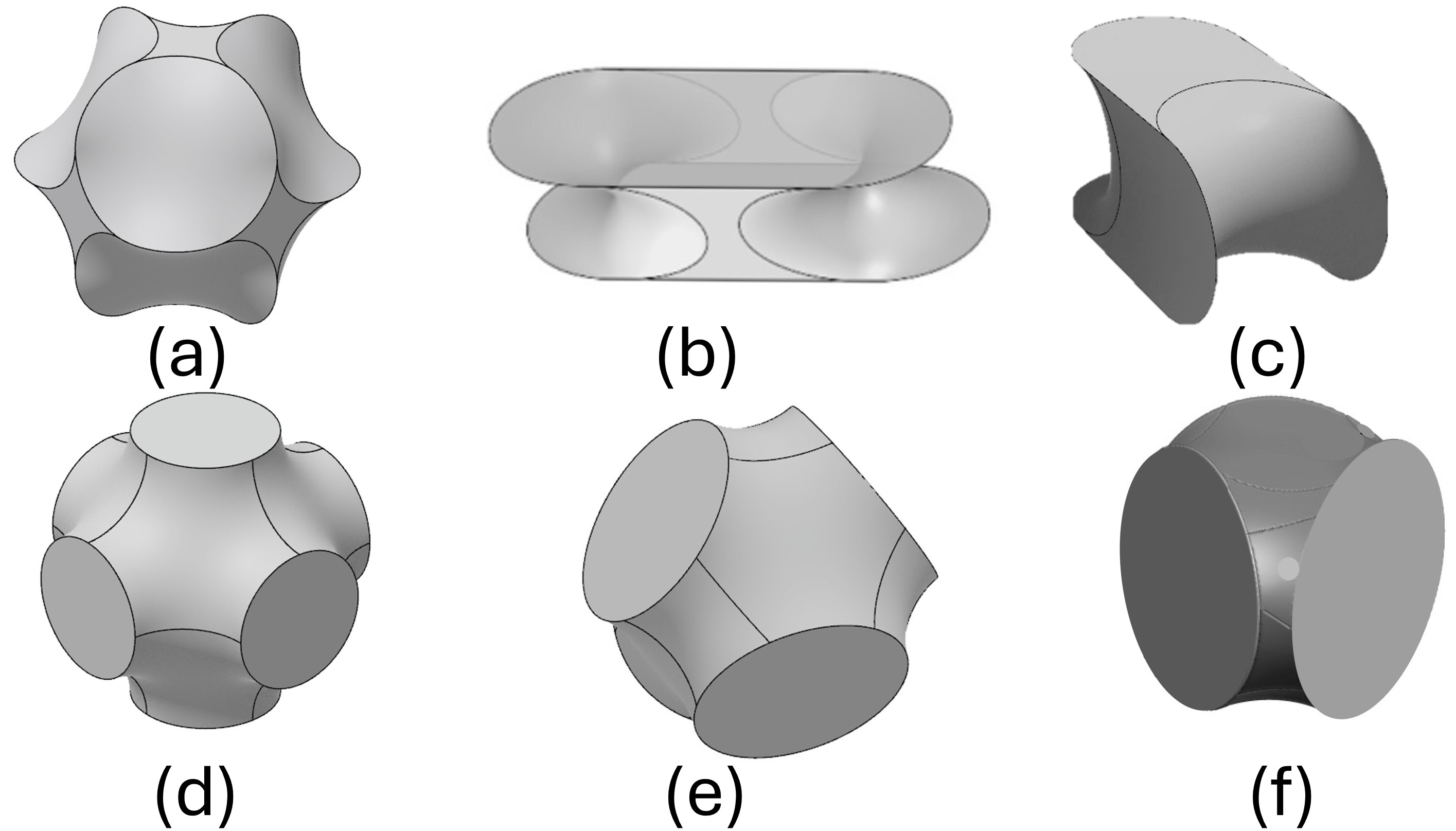}
\caption{Soft cells. First row: soft cells obtained by the edge bending (EB) algorithm. (a) The f2 soft cell obtained from the truncated octahedron. (b) Soft cell with $Z_2 \times Z_2 \times Z_2$ symmetry, obtained from the cube. (c) Soft cell with $Z_2 \times Z_2$ symmetry, obtained from the cube. Second row: soft cells associated with classical triply periodic minimal surfaces (TPMS). (d) Schwarz P unit cell. (e) Schwarz D unit cell. (f) Gyroid unit cell.}
\label{fig:0}
\end{figure}

We show (Appendix~\ref{sec:basic},
Proposition~\ref{prop:soft}) that tubular tilings possess strong
regularity properties. In particular, when $d>2$ they contain no
sharp corners and therefore belong to the class of soft tilings.
From this perspective, softness emerges naturally from the topology
of discretized binary mixtures rather than solely from local
geometric constraints.

To compare tubular tilings with earlier constructions, we adopt a
generalized notion of softness (Appendix~\ref{sec:basic},
Definition~\ref{defn:soft1}). A tile is called $k$-soft if the
highest codimension of its smooth boundary strata is $k$. This
definition admits multiply connected tiles, extends naturally to
arbitrary dimensions and, in dimension three, identifies
corner-free tilings with 2-soft tilings.

\subsection{Strategy of proof}

In Appendix~\ref{sec:basic} we give the formal definitions of tubular tilings and soft cells, while Appendix~\ref{sec:thm_euler} contains the proof of Theorem~\ref{thm:eulerD}.
Below we motivate both the definitions and the strategy of the proof.

Tubular tilings may be introduced in two complementary ways: either by first defining a tubular surface and subsequently introducing a discretization, or conversely by associating a smooth surface with an existing tiling.
In the Introduction we followed the latter approach, and in Appendix~\ref{sec:basic} we give a formal definition (Definition~\ref{defn:tubulartilings}) in the same spirit.

In the proof, rather than refining tubular tilings into CW complexes, we work directly with their natural geometric structure using inclusion--exclusion for Euler characteristic.

Convex tilings by polyhedral cells naturally define CW complexes, and their topology is captured by the Euler--Poincaré formula
\begin{equation}\label{eq:EP}
\chi(M^d)=\sum_{k=0}^d (-1)^k c_k,
\end{equation}
which computes the Euler characteristic of the ambient manifold as an alternating sum of the numbers of $k$--cells.

Tubular tilings, however, are not naturally CW complexes. A tubular tile is a compact $d$--manifold with piecewise smooth boundary whose geometric structure consists of $(d-1)$--dimensional strata (the internal interface, given by disjoint cross-sectional manifolds, and the external interface, consisting of disjoint smooth tubular patches) together with $(d-2)$--dimensional strata (their smooth boundaries). Such cells generally have nontrivial topology, and their Euler characteristic is not determined solely by dimension. Converting this geometric decomposition into a CW complex would require introducing numerous auxiliary strata and subdividing the geometric cells into contractible pieces.

While such a refinement would make the Euler--Poincaré formula~(\ref{eq:EP}) formally applicable, it has two disadvantages for the present purposes. First, it introduces substantial auxiliary combinatorial detail unrelated to the intrinsic geometric structure of tubular cells. Second, it obscures essential geometric information by replacing multiply connected geometric cells with collections of disks. In particular, adjacency relations and disk valencies, which play a central role in the balance laws, are not naturally encoded at the CW level without additional bookkeeping.

For this reason, the Euler balance laws are formulated directly at the level of the natural geometric strata of the tiling using inclusion--exclusion \cite{Hatcher}:
\begin{equation}\label{eq:add}
\chi(X\cup Y)=\chi(X)+\chi(Y)-\chi(X\cap Y).
\end{equation}
Conceptually, this may be viewed as a geometry-adapted compression of the full CW Euler bookkeeping: auxiliary strata are suppressed, while the gluing relations between tubular cells and separating disks are kept explicit.

\section{Geometric constructions and physical applications}\label{sec:app}

Theorem~\ref{thm:eulerD} establishes a topological balance law between the two phases of a tubular tiling, and in the Introduction we presented several geometric examples. However, neither the theorem nor the examples provide an \emph{algorithm} for constructing tubular tilings.

As noted in the Introduction, TPMS such as the Gyroid and the Schwarz $P$ and $D$ surfaces (see Figure \ref{fig:tub}(b1)) provide examples of binary mixtures, the natural discretizations of which are tubular tilings where
the complementary domains $A$ and $B$ can be deformation retracted on two skeletal graphs $G_A,G_B$. Here we follow the opposite path: we assume that the skeletal graph $G_A$ for  phase $A$ (the observable phase
) is given, and we provide an algorithm to construct the corresponding tubular surface $T$ and the tubular tiling $\mathcal T$ which also includes the hidden phase $B$. We describe two examples, both related to  $d=3$ dimensional ambient manifolds $M^d$.

First we consider the case when $M^d=\mathbb R^3$ and the skeletal graph $G_A$ is a polyhedral skeleton. Here we  introduce the \emph{frozen wire algorithm} and illustrate it on the example
of the Fermi surface of Copper. Next we consider the case when $M^d=S^3$ and the skeletal graph $G_A$ degenerates to a discrete set of isolated vertices. Here we introduce the \emph{double bubble  algorithm}
and illustrate it on the thick shell decomposition of the positively curved Robertson-Walker universe model.

\subsection{The Fermi surface of Copper: soft cells on sub-atomic scale}

\subsubsection{Polyhedral skeleta and the frozen wire algorithm}
In our first example we consider the case where the skeletal graph $G_A$ of the observable phase is given as a polyhedral skeleton. To establish the tiling for the 
$B$-phase, we first prove a simple corollary concerning polyhedral tilings.

\begin{cor}\label{cor:tub}
Let $\mathcal{M}$ be a normal, balanced convex tiling in $\mathbb R^3$ \cite{Grunbaum}, with $F,V,f,v$ denoting the respective average numbers of faces of tiles, vertices of tiles, faces of vertex polyhedra, and vertices of vertex polyhedra. Then
\begin{equation}\label{eq:thm1}
V(2-v)-f(2-F)=0.
\end{equation}
\end{cor}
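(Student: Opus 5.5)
The plan is to obtain the corollary as a special case of Theorem~\ref{thm:eulerD} with $d=3$, $M^d=\mathbb R^3$, so the right-hand side of (\ref{eq:thm0D}) vanishes. From the mosaic $\mathcal M$ I will build a tubular tiling whose observable phase $A$ is a thickening of the edge graph $G_A$ of $\mathcal M$. The hidden phase $B$ is the complement, consisting of the shrunk cells. This is the ``frozen wire'' idea.

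\textbf{Construction.} Fix a small $\varepsilon>0$ and let $U$ be a smoothed $\varepsilon$-neighbourhood of the $1$-skeleton of $\mathcal M$. The boundary $T=\partial U$ is taken to be a smooth embedded surface; near each vertex the tubes around the edges are rounded into a smooth junction.
\begin{itemize}
\item For each vertex $x$, the $A$-tile is the part of $U$ consisting of a ball around $x$ together with the half-tubes along the edges at $x$, cut off at the edge midpoints by transversal disks.
\item For each cell $P$, the $B$-tile is $P\setminus U$.
\end{itemize}
The tiling and tubularity conditions of Definition~\ref{defn:tubulartilings} then need to be checked.
\begin{itemize}
\item The $A/B$ interfaces are the pieces of $T$ cut out by the cells and by the midpoint disks, so they tile $T$.
\item The $A/A$ interfaces are the disjoint midpoint disks.
\item The $B/B$ interfaces are the faces of $\mathcal M$ with an $\varepsilon$-collar removed. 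These are disjoint disks, one per face.
\end{itemize}

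\textbf{Euler characteristics.} An $A$-tile deformation retracts onto its vertex star, so it is contractible and $\chi_{\mathbf A}=1$. Its internal interface consists of one disk per edge at $x$. For a normal convex mosaic this number equals the number of vertices of the vertex polyhedron at $x$, so on average $\chi_{\bar A}=v$.

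A $B$-tile is a convex polytope with a collar of its $1$-skeleton removed, hence still a $3$-ball, so $\chi_{\mathbf B}=1$. Its internal interface is one disk per face, so on average $\chi_{\bar B}=F$.

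Theorem~\ref{thm:eulerD} with $d=3$ then gives
\begin{equation*}
p_A(2-v)-p_B(2-F)=0 .
\end{equation*}

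\textbf{Frequencies.} Here $p_A/p_B=n_0/n_3$ is the limiting ratio of the numbers of vertices and cells. To identify this ratio I double count vertex--cell incidences. The faces of the vertex polyhedron at $x$ correspond bijectively to the cells containing $x$, so
\begin{equation*}
n_0 f = n_3 V
\end{equation*}
in the averaged sense, hence $p_A/p_B=V/f$. Substituting gives $V(2-v)-f(2-F)=0$.

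\textbf{Main obstacle.} I expect the hardest step to be justifying the infinite-tiling averaging. I need to show that the averages $\chi_{\bar A},\chi_{\bar B}$ and the frequencies $p_A,p_B$ required by Theorem~\ref{thm:eulerD} exist, and that the incidence double count $n_0 f=n_3 V$ survives the passage to the limit. The approach would be to work with balls of radius $R\to\infty$ and use normality (uniformly bounded inradius and circumradius) to show that boundary effects are $O(R^2)$ against $\Theta(R^3)$ counts. Balancedness then ensures that the limits exist and agree with the averages $F,V,f,v$ defined for $\mathcal M$.

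A secondary technical point is to confirm that $T$ can be smoothed at the vertex junctions without changing the topology of the tiles, and that the $\varepsilon$-neighbourhoods can be chosen uniformly. This uniform choice is again guaranteed by normality.
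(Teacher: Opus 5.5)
Your proposal is correct and follows essentially the same route as the paper: the frozen wire construction on the edge skeleton, the values $\chi_{\mathbf A}=\chi_{\mathbf B}=1$, $\chi_{\bar A}=v$, $\chi_{\bar B}=F$, the vertex--cell incidence double count giving $p_A/p_B=V/f$, and Theorem~\ref{thm:eulerD} with $d=3$ and vanishing right-hand side. Your extra detail on the explicit tile construction and on the boundary-effect estimates in the limit goes beyond what the paper spells out, but it does not change the argument.
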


\begin{proof}
Let $E(\mathcal M)$ denote the edge skeleton of $\mathcal M$. For sufficiently small tube radius, a tubular neighbourhood of $E(\mathcal M)$ defines a smooth embedded surface $T$ separating $\mathbb R^3$ into two phases. We call this construction the \emph{frozen wire algorithm}.

The resulting tubular tiling has one $A$--cell associated with each vertex of $\mathcal M$  and one $B$--cell inside each polyhedral cell of $\mathcal M$. Since both cell types are simply connected,
\begin{equation}\label{w1}
\chi_{\mathbf A}=\chi_{\mathbf B}=1.
\end{equation}
The internal interfaces of an $A$--cell correspond to the vertices of the vertex polyhedron, while the internal interfaces of a $B$--cell correspond to the faces of the original polyhedron. Hence
\begin{equation}\label{w2}
\chi_{\bar A}=v,
\qquad
\chi_{\bar B}=F.
\end{equation}

If $N_C$ and $N_V$ denote the numbers of cells and vertices of $\mathcal M$, respectively, then double counting cell--vertex incidences yields
\[
N_Vf=N_CV.
\]
Therefore
\begin{equation}\label{w3}
\frac{p_A}{p_B}=\frac{V}{f}.
\end{equation}
Since $M^d=\mathbb R^3$, Theorem~\ref{thm:eulerD} has vanishing right-hand side.
Dividing \eqref{eq:thm0D} by $p_B$ and substituting (\ref{w1})-(\ref{w3})  we get
a formula equivalent to (\ref{eq:thm1}).
\end{proof}

\begin{figure}
    \centering
    \includegraphics[width=\linewidth]{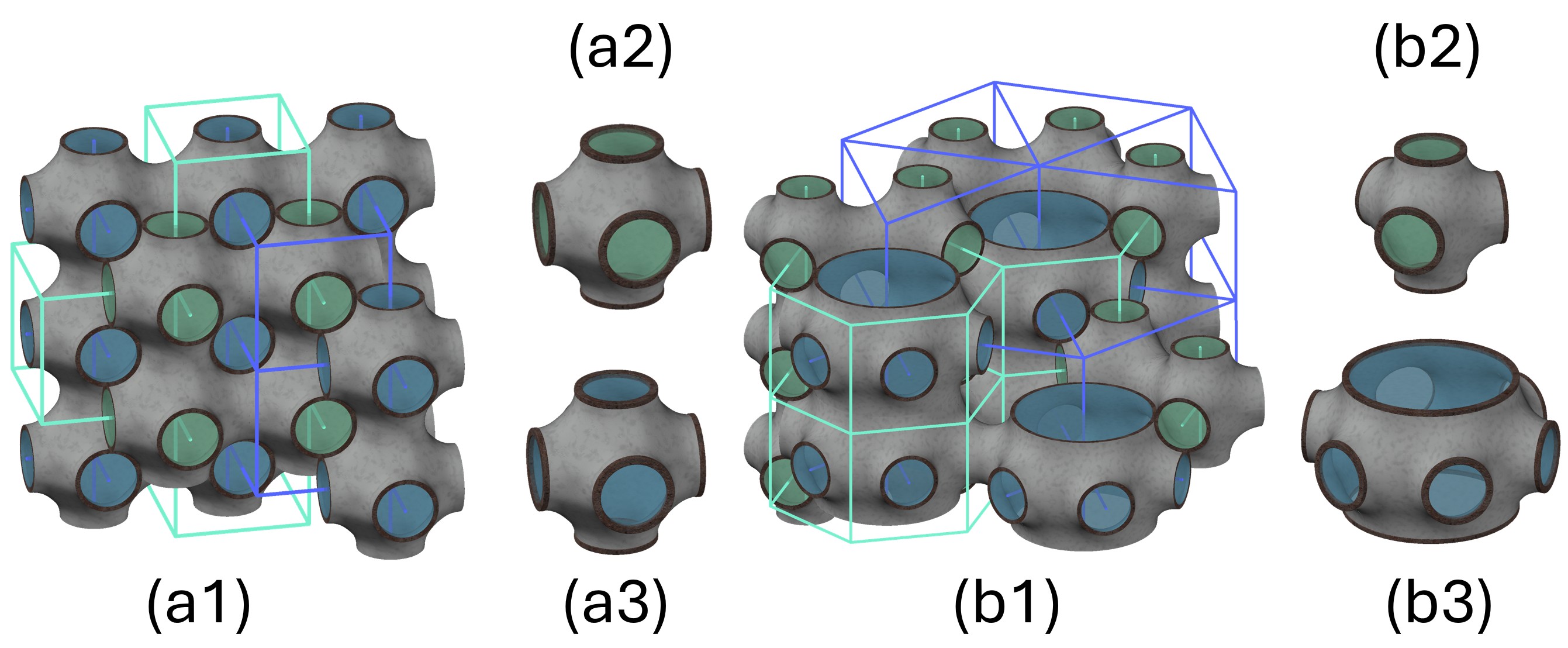}
    \caption{Tubular soft tilings constructed on polyhedral skeletons. (a) The cubic grid: (a1) tiling (a2) $A$-cell (a3) $B$-cell (b) The hexagonal prismatic grid: (b1) tiling (b2) $A$-cell (b3) $B$-cell}
    \label{fig:polyhedral}
\end{figure}

\begin{rem}\label{rem:pol}
Let $\mathcal M$ be a polyhedral tiling of $\mathbb R^3$ and let
$E(\mathcal M)$ denote its edge skeleton. We call a graph
$\bar E$ a \emph{superpolyhedral grid} if it has the same nodal set as
$E(\mathcal M)$ and may contain additional straight edges.
If a smooth embedded surface
$T\subset \mathbb R^3$ satisfies
\[
T\cap \bar E=\emptyset,
\]
then $T$ also avoids $E(\mathcal M)$ and therefore defines a tubular surface associated with the tubular tiling obtained from the frozen wire algorithm applied to $E(\mathcal M)$.
\end{rem}

\subsubsection{Geometry of Fermi surfaces}

A crystalline solid determines a lattice in physical space and an associated reciprocal lattice in momentum space. Physically, reciprocal space is considered modulo reciprocal lattice translations, yielding the Brillouin zone, which is topologically a three-dimensional torus. The Fermi surface is the level surface separating occupied and unoccupied electron states.

For the present topological computation it is more convenient to work with the equivalent triply periodic representation in $\mathbb R^3$. Geometrically, the Fermi surface defines a binary decomposition of reciprocal space into two complementary phases separated by a smooth embedded surface. Since only one phase (the occupied phase) is directly observable in many experiments, Fermi surfaces provide natural examples of semi-hidden tubular tilings.

\begin{figure}
    \centering
    \includegraphics[width=\linewidth]{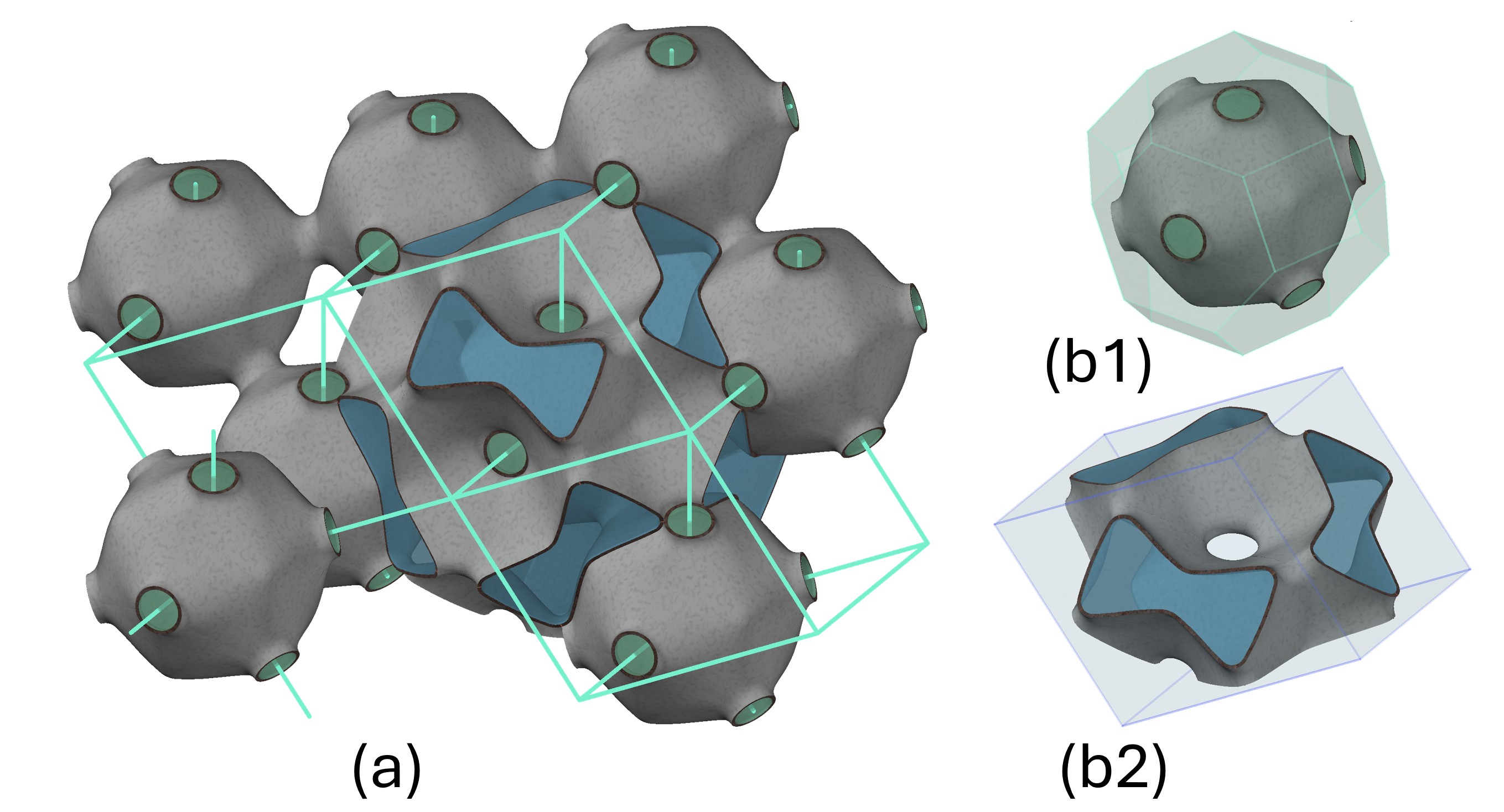}
    \caption{The Fermi surface of copper and the associated tubular tiling. (a) Fermi surface of copper (grey surface) with internal interfaces for the occupied, observable $A$-phase (green disks), internal interfaces for the
		non-occupied, hidden $B$-phase (blue disks) and the $B_D$ superpolyhedral skeleton (green lines). Observe that the $A$-phase avoids the $B_D$ skeleton. (b1) The soft unit cell of the occupied, observable $A$-phase, illustrated inside the truncated octahedron cell. (b2) The soft unit cell with torus topology of the non-occupied, hidden $B$-phase, illustrated inside the rhombohedron.}
    \label{fig:fermi}
\end{figure}

\subsubsection{The Fermi surface of copper}

Let $B_D$ denote the union of the diagonal edges of the BCC lattice. These edges form a superpolyhedral grid since they coincide with the union of the edge skeletons of the two rhombohedral tilings associated with the BCC lattice.

The Brillouin zone of copper is a truncated octahedron. Its Fermi surface is approximately spherical but develops eight necks passing through the centres of the hexagonal faces. Consequently, the Fermi surface avoids the diagonal BCC grid $B_D$ and intersects each hexagonal face in a simply connected patch.

We call the occupied phase the $A$--phase. Since the occupied region inside a truncated octahedron is simply connected,
\begin{equation}\label{eq:cufermi1}
\chi_{\mathbf A}=1.
\end{equation}
The internal interface of an $A$--cell consists of eight disjoint discs, one on each hexagonal face, hence
\begin{equation}\label{eq:cufermi1a}
\chi_{\bar A}=8.
\end{equation}

To construct the complementary $B$--cells we choose one of the rhombohedral tilings contained in $B_D$ and apply the frozen wire algorithm. Since a rhombohedron has six faces, the internal interface of a $B$--cell consists of six disjoint discs, yielding
\begin{equation}\label{eq:cufermi2}
\chi_{\bar B}=6.
\end{equation}

Both the truncated octahedron and the rhombohedron tiles $\mathbb R^3$ without gaps and overlaps and both are associated with the BCC lattice. 
If the edge length of the cube is taken as unit, both cells have volume $1/2$, so they occur with equal asymptotic frequency, so
\begin{equation}\label{eq:cufermi3}
p_A=p_B.
\end{equation}

Since the ambient manifold is $M^d=\mathbb R^3$, Theorem~\ref{thm:eulerD} has vanishing right-hand side. Substituting (\ref{eq:cufermi1})--(\ref{eq:cufermi3}) into (\ref{eq:thm0D}) yields
\begin{equation}\label{eq:cufermi4}
\chi_{\mathbf B}=0.
\end{equation}
This is illustrated in Figure \ref{fig:fermi}(b1) where we can see the $B$-cell with torus topology.
As we can see, the topology of the hidden $B$-phase can be inferred directly from
the topology of the observable phase together with geometric information,
namely the relative volumes of the corresponding cells. Although the Fermi surface determines both complementary regions equally, 
physical intuition naturally singles out the occupied region. The Euler balance theorem restores this symmetry by allowing the 
topology of the complementary region to be inferred from the observed interface.

In the next
example we consider a different ambient manifold and a very different
physical setting; nevertheless, the argument follows a similar pattern,
combining topological and geometric information to recover properties of
the hidden phase.

\subsection{The Robertson-Walker universe: soft cells and cosmic shells}\label{ss:robertson}

In the forthcoming example we regard the case where the skeletal graph $G_A$ of the observable phase has no edges and is thus reduced to a finite point set.
First, in Corollary \ref{cor:bubble} we show that in the $M^d=S^3$ ambient space, by using two finite sets of embedded $S^2$ spheres ('double bubbles'), a broad range of soft tilings can be constructed
for this special skeletal graph.
Then we use the double bubble algorithm to construct a soft tiling for the positively curved Robertson-Walker model of the universe where we regard the union of the planets as the observable  $A$-phase and use the corollary to construct the soft tiling of the $B$-phase which corresponds to a thick shell decomposition.

\subsubsection{Degenerate skeleta and the double bubble algorithm}
\begin{cor}\label{cor:bubble}
Let $M^d=S^3$ and  let $F^A_n \subset S^3$
be the union of $n>1$ pairwise disjoint, embedded $S^2$ spheres.
 Then, for arbitrary $q <n $ there exist two soft 
tubular tilings, both  with $T=F^A_n$ as tubular surface and $\chi_{\mathbf{A}}=1$, and 
\begin{enumerate}
\item one tiling with $\chi_{\mathbf{B}}>q$ and 
\item another tiling with $\chi_{\mathbf{B}}<q$.
\end{enumerate}
\end{cor}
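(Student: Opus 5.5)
The plan is to build the tilings explicitly and then read off $\chi_{\mathbf B}$ from Theorem~\ref{thm:eulerD} with $d=3$. Since $d$ is odd the right-hand side vanishes, so
\[
p_A(2\chi_{\mathbf A}-\chi_{\bar A})=p_B(2\chi_{\mathbf B}-\chi_{\bar B}).
\]

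\textbf{The $A$-phase.} The $n$ spheres of $F^A_n$ bound $n$ disjoint balls in $S^3$; choosing them suitably, we may assume the balls are disjoint. Each ball is one $A$-tile. Then $\chi_{\mathbf A}=1$, there are no internal $A$/$A$ interfaces, and $\chi_{\bar A}=0$. The external interface of each $A$-tile is a whole sphere, so the tiling condition holds with $T=F^A_n$.

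The complement $S^3\setminus\bigcup\mathrm{int}(\text{balls})$ is the hidden $B$-phase. It is a $3$-sphere with $n$ balls removed, with Euler characteristic $\chi(S^3)-n\cdot\chi(B^3)+n\cdot\chi(S^2)=0-n+2n=n$. We then cut this region into $B$-tiles along disjoint properly embedded discs or spheres, the ``double bubbles'': a second family of embedded $S^2$'s, or cross-sectional discs, meeting $T$ transversally. These pieces serve as internal $B$/$B$ interfaces.

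\textbf{Tuning $\chi_{\mathbf B}$.} Let $m$ be the number of $B$-tiles; then $p_A=n/(n+m)$ and $p_B=m/(n+m)$. The balance law becomes $2n=m(2\chi_{\mathbf B}-\chi_{\bar B})$, that is
\[
\chi_{\mathbf B}=\frac{n}{m}+\frac{\chi_{\bar B}}{2}.
\]
This can also be checked directly by inclusion--exclusion (\ref{eq:add}): gluing $B$-tiles along $k$ interface pieces of Euler characteristic $e_j$ gives $n=\sum\chi(B_i)-\sum e_j$.

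\emph{Case (1), $\chi_{\mathbf B}>q$.} Take $m=1$ with no internal interfaces. Then $\chi_{\mathbf B}=n>q$.

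\emph{Case (2), $\chi_{\mathbf B}<q$.} Decrease the average by cutting along spheres. Choose $k$ disjoint embedded $S^2$'s in the $B$-region, each separating off a collection of the balls, for example nested spheres around subsets of the balls. Each sphere contributes $\chi=2$, so $\sum\chi(B_i)=n+2k$ over $m=k+1$ tiles, and the average is $(n+2k)/(k+1)$. This tends to $2$, which is below $q$ only if $q>2$.

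To push the average further down, cut instead along discs. Cutting along a disc ($\chi=1$) that runs from one bubble to another turns two boundary spheres into one handle-free piece. Alternatively, use spheres that split the region into shells $S^2\times I$, each with $\chi=0$, and punctured shells. A punctured shell with $j$ holes has $\chi=j$, so shells with one bubble inside have $\chi=1$. With $n$ bubbles arranged concentrically or in a chain, the $B$-region decomposes into shells of $\chi=0$ plus cavity pieces. Choosing the number of empty shells large drives the average $\chi_{\mathbf B}$ toward $0<q$ for $q\geq1$.

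For $q\le 0$ we need negative Euler characteristic. To get it, use disc cuts that create pieces of genus, such as a solid handlebody complement. Cutting the $B$-region along discs connecting bubbles produces tiles homotopic to wedges of circles, with $\chi=1-g$. Taking one such tile of large $g$, with the rest shells, makes the average negative enough.

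\textbf{Main obstacle.} The hard step is verifying softness (Proposition~\ref{prop:soft}) and the tubularity condition for each cut: the internal interfaces must be disjoint smooth pieces meeting $T$ transversally, so that no corners arise. I would handle this by choosing all cuts to be round spheres or discs orthogonal to $T$ in a standard model of $S^3$, and checking transversality in that model.
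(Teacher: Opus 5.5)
\textbf{There is a genuine gap in case (2).} Your case (1) is correct and matches the paper. So is your reduction $\chi_{\mathbf B}=n/m+\chi_{\bar B}/2$, equivalently $\chi_{\mathbf B}=(n+\sum_j e_j)/m$, where $e_j$ are the Euler characteristics of the internal $B$/$B$ interface pieces.

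\emph{The shell computation is wrong.} $\chi(S^2\times I)=2$, not $0$, and a shell with $j$ cavities has $\chi=j+2$. So $k$ sphere cuts give $\chi_{\mathbf B}=(n+2k)/(k+1)\ge 2$, and adding empty shells cannot push the average toward $0$.

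\emph{Disc and sphere cuts can never reach $q\le 1$.} Such pieces have $e_j\ge 1$. Cutting the connected $B$-phase along $K$ connected surfaces yields $m\le K+1$ tiles, so your own formula gives $\chi_{\mathbf B}\ge (n+K)/(K+1)>1$.

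\emph{The genus claim is false.} The $B$-phase ($S^3$ minus $n$ disjoint balls) is simply connected. By van Kampen, gluing along a disc or sphere gives a free product of fundamental groups, so every piece is simply connected. In any case $\sum_i\chi(B_i)=n+\sum_j e_j$ is fixed by the interfaces, so a tile of large genus would have to be compensated by other tiles. Also, a disc has only one boundary circle, so it cannot ``run from one bubble to another''. An annulus would give $e_j=0$ and still $\chi_{\mathbf B}>0$.

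\textbf{The missing idea} is that the internal interfaces must have negative, indeed arbitrarily negative, Euler characteristic. This forces them to cross $T$. The paper takes $m$ disjoint $B$-spheres, giving $m+1$ $B$-tiles, that meet $F^A_n$ transversally in $N$ circles. These circles are produced by small local wiggles pushing caps of a $B$-sphere into the $A$-balls. Each new circle removes a disc from the $B$-part of that $B$-sphere, so a $B$-sphere carrying $c$ circles contributes an interface of Euler characteristic $2-c$, and $\sum_i\chi(B_i)$ drops by one. The balance law then gives $\chi_{\mathbf B}=(2m-N+n)/(m+1)$. This tends to $-\infty$ as $N\to\infty$ with $n,m$ fixed; already $m=1$ gives $(n+2-N)/2$.

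Transversality of these intersections is what yields the triple-junction condition and softness. So the step you flagged as the main obstacle is the routine part. What your proposal lacks is the construction of interfaces with negative Euler characteristic.
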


\begin{proof}
The surface $F^A_n$ decomposes $S^3$ into $n+1$ connected components, $n$ of which are 3-balls
bounded by the $n$ spheres. We 
identify the $A$-tiles with these balls and so we can write:
\begin{equation}\label{eq:Atiles}
\chi_{\mathbf{A}}=1, \quad \chi_{\bar A}=0,
\end{equation}
and for brevity we will refer to the $S^2$ spheres in $F^A_n$ as $A$-spheres.
To construct the tiling of the $B$-phase we introduce $F^B_m$
as a set of $m$ pairwise disjoint, embedded $S^2$ spheres in $S^3$ to which we refer as the $B$ spheres.
We assume that if  the intersection $F^A_n \cap F^B_m$ is not empty then it is transversal and consists of $N$ disjoint copies of $S^1$, to which we refer as $C$-circles.
The transversality assumptions guarantee that the induced decomposition satisfies the tubularity conditions. Indeed, every point of a $C$-circle is incident to exactly two external interface patches and one internal interface patch, so the triple-junction condition is satisfied.

We note that for $m>0$, $N$ may be regarded as an arbitrary nonnegative integer.
Indeed, by introducing sufficiently small local wiggles on the $B$-spheres, one may realize any prescribed finite number of transverse $C$-circles with the fixed family $F_n^A$.
While the  surfaces $F^A_n$ and $F^B_m$ always define a tubular tiling, and their definition is analogous, they play entirely different roles in the
tubular tiling. The union of $A$-spheres is the tubular surface $T$ and each $A$-sphere is the entire external interface
of an $A$-tile and (possibly part of) the external interface of a $B$-tile. On the other hand, a  $B$ sphere 
is (possibly part of) the internal interface of a $B$-tile but it is never part of an external interface.

Let us write the Euler characteristic for the internal interface $\bar \partial B_i$ of the $i$th $B$-tile.
Since the $B$-spheres are disjoint, the interface contains an integer number of spheres which we denote by $b_i$.
We denote the number of $C$-circles on the boundary by $c_i$ and since each boundary circle decreases the Euler characteristic by one, inclusion-exclusion yields
\begin{equation}\label{eq:binternal}
\chi(\bar \partial B_i)=2b_i-c_i, \qquad i=1,\dots ,m+1.
\end{equation}
Our next goal is to rewrite (\ref{eq:binternal}) for the respective averages which we denote by $b$ and $c$. Since there are $m+1$ $B$-tiles, we get
$$b=\frac{1}{m+1}\sum _i b_i, \quad c=\frac{1}{m+1}\sum _i c_i. $$
Each $B$-sphere belongs to the internal interfaces of exactly two
$B$-tiles. Hence
\[
\sum_i b_i=2m.
\]
Likewise, each $C$-circle belongs to the boundaries of exactly two
internal interfaces, so
\[
\sum_i c_i=2N.
\]
Hence
\[
b=\frac{2m}{m+1},\qquad
c=\frac{2N}{m+1}=bk,
\]
where $k=N/m$ is the average number of $C$-circles per $B$-sphere.
Averaging (\ref{eq:binternal}) therefore yields
\begin{equation}
\chi_{\bar B}=2b-c
=\frac{2m}{m+1}(2-k).
\end{equation}
The relative frequencies for the two phases are
\[
p_A=\frac{n}{n+m+1}, \qquad p_B=\frac{m+1}{n+m+1}.
\]
Now we can substitute into (\ref{eq:thm0D}). Since $\chi(S^3)=0$, the right hand side is zero and we obtain:
\begin{equation}\label{eq:thm0D1}
\chi _{\mathbf{B}}=\frac{2m-N+n}{m+1}.
\end{equation}
For $m=N=0$ we obtain $\chi _{\mathbf{B}}=n$, proving the existence of the first tiling in Corollary \ref{cor:bubble}.
Next we fix $n$ and $m$. Then, since $N$ may be arbitrarily large, the right-hand side of (\ref{eq:thm0D1}) can be made smaller than $q$, proving the second claim.
\end{proof}

\begin{figure}[ht!]
    \centering
    \includegraphics[width=0.9\linewidth]{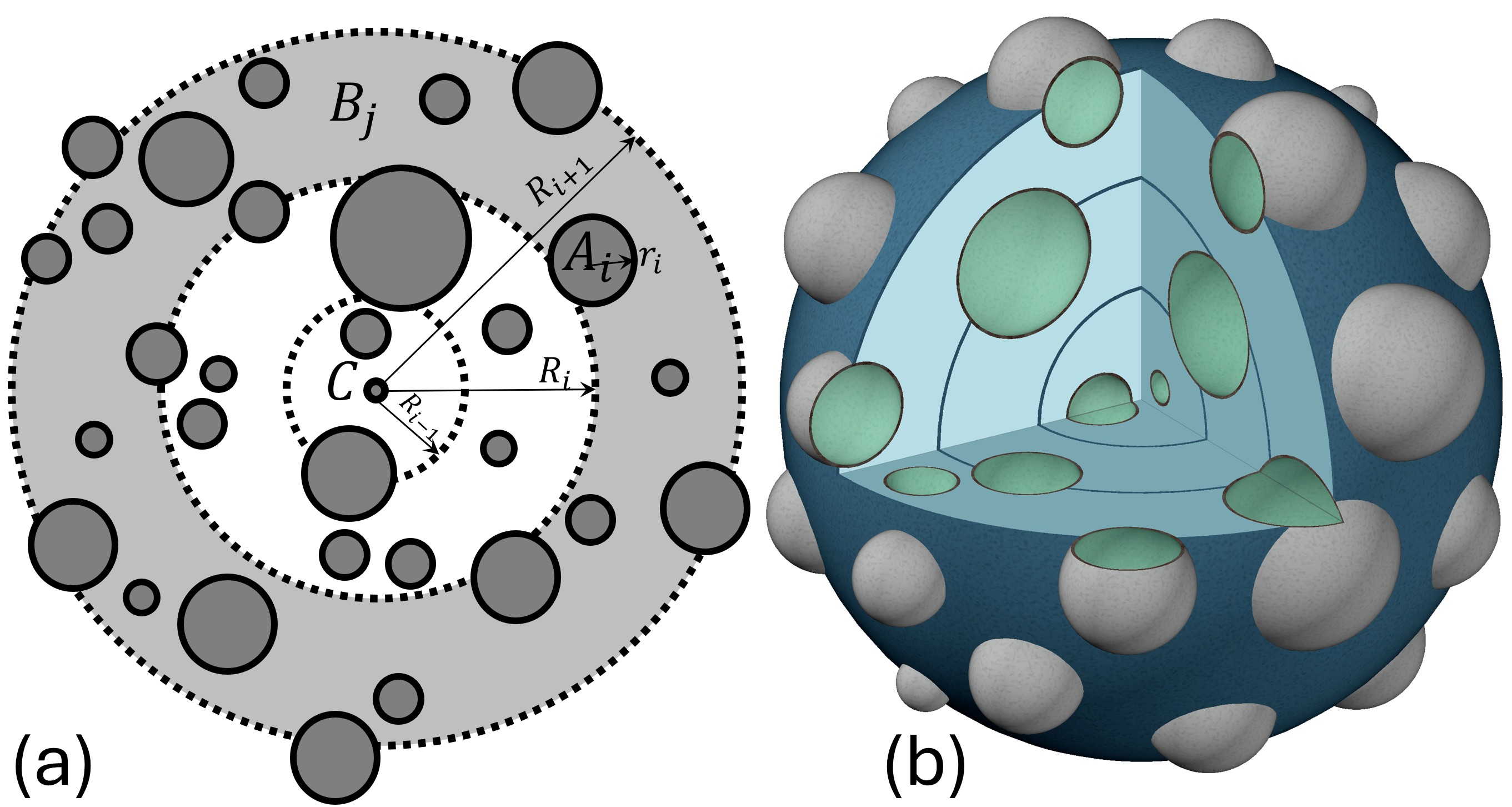}
    \caption{Thick shell decomposition of the $S^3$ Robertson-Walker universe interpreted as a soft tubular tiling. (a) Schematic cross section. Observer located at center $C$. Solid line (union of circles with radii $r_i$, $i=1,2,\dots n$, $n$ being the number of planets) corresponds to smooth tubular surface $T$ which is the union of $n$ disjoint copies of $S^2$ spheres. Domain $A$ is the interior of planets, filled with grey. Each planet is one soft tile $A_i$,$i=1,2, \dots n$. Dotted line (union of circles with radii $R_j=jR_0$, $j=1,2, \dots m$) corresponds to the internal interfaces between $B$-tiles, each of which is a spherical shell with thickness $R_0$. One thick shell filled with light grey. (b) 3D visualization of the double bubble construction for the $S^3$ RW universe. Tubular surface (external interface) appears as a collection of small grey spheres. Internal interface of $B$ tiles appears as large, concentric blue spheres. The $A$-phase is the interior of $A$-tiles (planets), shown in green.  Observe that despite transversal intersection, all boundary strata are smooth and no sharp corners arise, hence the tiling is soft.}
    \label{fig:robertson}
\end{figure}

\subsubsection{The RW universe and its radial shell decomposition}
The Robertson--Walker (RW) model of the Universe
\cite{Robertson1935,Walker1937} assumes spatial slices of constant
curvature. Depending on the sign of the curvature, the corresponding
spatial geometry is modeled on $S^3$, $\mathbb R^3$ (or compact
quotients such as $T^3$), or $H^3$.
Here we discuss the positively curved $S^3$ case.
We assume that the  $S^3$ RW universe is decomposed into a finite set of concentric thick spherical
shells with center $C$. This radial shell decomposition underlies many standard constructions
  in cosmology, such as radial binning in galaxy surveys
  \cite{Peebles1980LargeScaleStructure,Davis1983GalaxyCorrelation},
  tomographic analysis of large–scale structure
  \cite{Kitching2014WeakLensingTomography}, and harmonic
  decompositions based on spherical Fourier–Bessel basis functions
  \cite{Heavens1995ThreeDimensionCosmology,Fisher1995SFB}. It provides
  a natural way to discretize the Universe into thick shells of fixed
  radial width around an observer.
	
	In the RW $S^3$ universe a finite set of thick shells exhaust the entire domain
and, due to the simple internal metric,  the radial distance corresponds to geodesic distance.
If we consider the $S^3$ universe embedded into $\mathbb R^4$ then the thick shell decomposition with center $C$ 
is equivalent to slices of $S^3$ by hyperplanes which are parallel to the tangent hyperplane at $C$.
Next we show that the radial shell decomposition of the RW Universe can be identified with a semi-hidden tubular soft tiling.

We emphasize that the purpose of this example is not to derive a new
cosmological result. Rather, the Robertson--Walker setting provides a
geometric framework, inspired by standard radial shell
decompositions used in cosmology, in which a semi-hidden tubular
tiling arises naturally. Our aim is to illustrate how
Theorem~\ref{thm:eulerD}  organizes the topology of such a tiling and
permits the inference of less obvious topological quantities associated
with the hidden phase.

\subsubsection{Interpretation of the double bubble formula to the RW universe}
The double bubble algorithm of Corollary \ref{cor:bubble} can be directly applied to this
system if we assume that the planets are modeled by 3-balls and their boundaries
are the $A$-spheres.  The concentric shells of the RW decomposition correspond to the $B$-spheres.
We can re-write (\ref{eq:thm0D1}) as:
\begin{equation}\label{eq:RW2}
\chi_{\mathbf B} =\left(2+\frac{1}{m}(n-N)\right)\frac{m}{m+1}.
\end{equation}
As we can observe, the topology of the $B$-cells is essentially controlled by the term $(n-N)/m$
which we can interpret heuristically.
If there would be no planets $(n=0)$ and no intersections $(N=0)$ then, except for the two
3-balls $B_1, B_m$ at the north and south poles,  all remaining $B$-cells would be spherical shells
with  $\chi(B_i)=2$, $(i=2,3, \dots m-1)$. Planets can cause two types of local "topological defects":
they can either appear as enclosed cavities (adding 1 to the Euler characteristic) or they can intersect both
boundary shells, resulting in a hole (subtracting 1 from the Euler characteristic). 	If $n=N$ then these
two effects cancel each other because, on average, each planet ($A$-sphere) is intersected by exactly one $B$-sphere,
producing one $C$-circle, so the intersection produces neither a cavity nor a hole. If $n>N$ then cavities dominate and if $n<N$ then
holes dominate. 

Corollary \ref{cor:bubble} showed that, in the general case, the range of $\chi_{\mathbf B}$ is infinite. However, if we
also consider metric information on the topological $A$-spheres and $B$ spheres then this range becomes much more 
restricted and may even be narrowed down to a single value.
In the case of the RW universe we treat a very special case of the double bubble algorithm because both
the $A$-spheres and the $B$-spheres are not just topological spheres but also spheres in the metric sense.
Since one $A$-sphere and one $B$-sphere can intersect transversely 
in at most one $C$-circle, so we have $N \leq nm$.
This constrains the Euler characteristic $\chi_{\mathbf B}$ of the hidden $B$-tiles 
more than the condition in Corollary \ref{cor:bubble} and we obtain the finite bounds
\[
2-n < \chi_{\mathbf B}\leq n,
\]
with the lower bound corresponding to the $m \to \infty$ limit, the upper bound attained at $m=0$. 
As we can observe, the admissible range of the topological quantity $\chi_{\mathbf{B}}$
can be narrowed using  metric information.

By adding further metric information, for fixed $n,m$ the actual expected number $N$ of intersections can be also  expressed 
and  the admissible interval for $\chi_{\mathbf B}$ can be narrowed to a single value. Let us assume that centers of planets are uniformly distributed in space 
and let us denote the average radius of a planet by $r$. Let us further assume that spacing between $B$-spheres is uniform
at radial distance $R_0$  (see Figure \ref{fig:robertson}). Then, geometric probability suggests that the average number $N/n$ of $C$-circles per planet can be written as
$N/n \approx 2r/R_0$. Fixing $n,m$ and substituting $N=2rn/R_0$ into (\ref{eq:RW2}) yields $\chi_{\mathbf B}$.

Both examples discussed in this section exhibit an observable phase and a  hidden phase.
In both cases the topology of the soft tiling for the hidden phase could be determined by relying on two sources of information:
  equation (\ref{eq:thm0D}) in Theorem \ref{thm:eulerD} and additional, geometric information. In the Fermi-surface example the latter
	was derived from the geometry of the BCC lattice and, jointly with equation (\ref{eq:thm0D}) this determined the hidden phase essentially uniquely. In the RW example metric spherical geometry reduced 
	an initially infinite range of possibilities to a finite interval and using geometric averaging the actual value of $\chi_{\mathbf B}$ could be computed.


\section{Summary}

In this paper we introduced \emph{tubular tilings}, natural discretizations of binary mixtures defined on smooth manifolds by separating interfaces. We proved that every tubular tiling in dimensions $d>2$ is at least \emph{2-soft}, so the boundaries of its tiles contain no strata of codimension greater than two. We also derived a global Euler balance law (Theorem~\ref{thm:eulerD}) relating the Euler characteristics of tubular tiles, their internal interfaces and the ambient manifold.

The present work suggests that the triple-junction tubularity condition (\ref{eq:triple}) may be viewed as the first member of a broader hierarchy. By excluding boundary strata of codimension greater than two, it naturally leads to 2-soft tilings and to the Euler balance law proved here. This raises the possibility that higher-order junction conditions may produce higher degrees of softness together with corresponding higher-order Euler balance laws.

Binary mixtures arise in a wide range of physical settings, including Fermi surfaces, skeletal structures in biology, Turing patterns in reaction--diffusion systems and cosmological models. In many such applications the separating interface is directly observed, whereas physical interpretation naturally focuses on only one of the two complementary regions. Although the interface determines both regions equally, the second region often remains a hidden geometric phase. Theorem~\ref{thm:eulerD} restores this symmetry by assigning tubular cells to both phases and relating their topology through a global Euler balance law. We illustrated this principle in two examples. For the Copper Fermi surface we showed that the soft cell associated with the unoccupied phase has torus topology. For the thick-shell decomposition of the positively curved Robertson--Walker universe we computed the Euler characteristic of the cells tiling the cosmic void.

Beyond these applications, tubular tilings provide a common framework for comparing and classifying binary mixtures through the topology of their discretizations. Several questions naturally arise. While compact ambient manifolds appear to admit 1-soft tilings, a simple inductive argument from \cite{softcells1} shows that no 1-soft tilings exist in Euclidean space $\mathbb{R}^d$. The present work suggests that 2-soft tilings are the natural replacement, although it remains unclear which other noncompact manifolds share this property.

The notion of $k$-softness also suggests finer geometric classifications. While $k$-softness excludes boundary strata of codimension greater than $k$, one may further distinguish $k$-soft tilings by suitable integral measures supported on codimension-$k$ strata. In particular, for 2-soft monohedral tilings of $\mathbb{R}^3$, where the lowest-dimensional boundary strata are edges, a natural candidate is the total Gaussian curvature integrated along all edges. We conjecture that this quantity satisfies the sharp lower bound $K_E\ge 6\pi$. If true, the tubular soft cell associated with the Gyroid (Figure~\ref{fig:0}(f)) would realize the extremal configuration. More broadly, such extremal principles would complement the topological balance laws developed here by linking them to quantitative geometric optimization.

\appendix

\section{Basic concepts}\label{sec:basic}

\subsection{Tubular tilings}

\begin{defn}[Tubular tilings]\label{defn:tubulartilings}
Let $M^d \subset \mathbb R^{d+1}$ be an embedded, smooth $d$--dimensional manifold without boundary, and let $\mathcal{T}_M$ be a locally finite tiling of $M^d$ with tiles $\tau_i$.  
We require that there exist constants $0 < r_- \le r_+ < \infty$ such that every tile $\tau_i$ contains a Euclidean ball of radius $r_-$ and is contained in a Euclidean ball of radius $r_+$.

We also require that the interface components
\[
\tau_{i,j} = \tau_i \cap \tau_j
\]
are either empty or smooth, compact $(d-1)$--dimensional manifolds
and we assume that the boundary of each tile is given by pairwise intersections with neighboring tiles:
\[
\partial \tau_i = \bigcup_{j \neq i} (\tau_i \cap \tau_j).
\]
We also assume that any two distinct connected components $\tau _{i,j}$ of interface sets $\tau_i \cap \tau_j$ intersect transversely whenever they intersect.

Each tile is assigned a label of the form $A_i$ or $B_j$, where indices distinguish tiles within each class, i.e. $A_i \neq A_j$ for $i \neq j$, and similarly $B_i \neq B_j$. Thus we obtain a \emph{binary labeling} $\mathcal{T}^{(A,B)}_M$ of the tiling. 
Each tile $\tau_i$ is assigned exactly one label $A_j$ or $B_k$, and we identify the tile with its label for notational convenience.
We define the structure for the boundary $\partial A_i$ for a tile $A_i$. We call
\begin{equation}\label{eq:extinterface}
\hat \partial A_i := \bigcup_j A_i \cap B_j 
\end{equation}
its \emph{external interface} and
\begin{equation}\label{eq:intinterface}
\bar \partial A_i := \bigcup_{j\neq i} A_i \cap A_j 
\end{equation}
its \emph{internal interface}.  Using these concepts, the boundary of the tile can be decomposed as
\begin{equation}\label{eq:cellboundary}
\partial A_i =\hat{\partial}A_i\cup\bar{\partial}A_i, 
\end{equation}
and we call
\begin{equation}\label{eq:sepboundary}
\mathring{\partial}A_i
:=\hat{\partial}A_i\cap\bar{\partial}A_i,
\end{equation}
the \emph{separation boundary} (see Figure \ref{fig:boundary}).
We call $\mathcal{T}_M$ a \emph{tubular tiling} if there exists a smooth embedded hypersurface $T \subset M^d$ and a binary labeling $\mathcal{T}^{(A,B)}_M$ such that:
\begin{equation}\label{eq:ttling}
\bigcup_{i,j} A_i \cap B_j  =   T, 
\end{equation}
and \begin{equation}\label{eq:triple}
A_i \cap A_j \cap A_k = B_i \cap B_j \cap B_k = \emptyset \qquad \text{for all } i\neq j\neq k.
\end{equation}
If these conditions are met then we call $T$ a \emph{tubular (hyper)surface}.
\end{defn}

\begin{figure}
    \centering
    \includegraphics[width=0.8\linewidth]{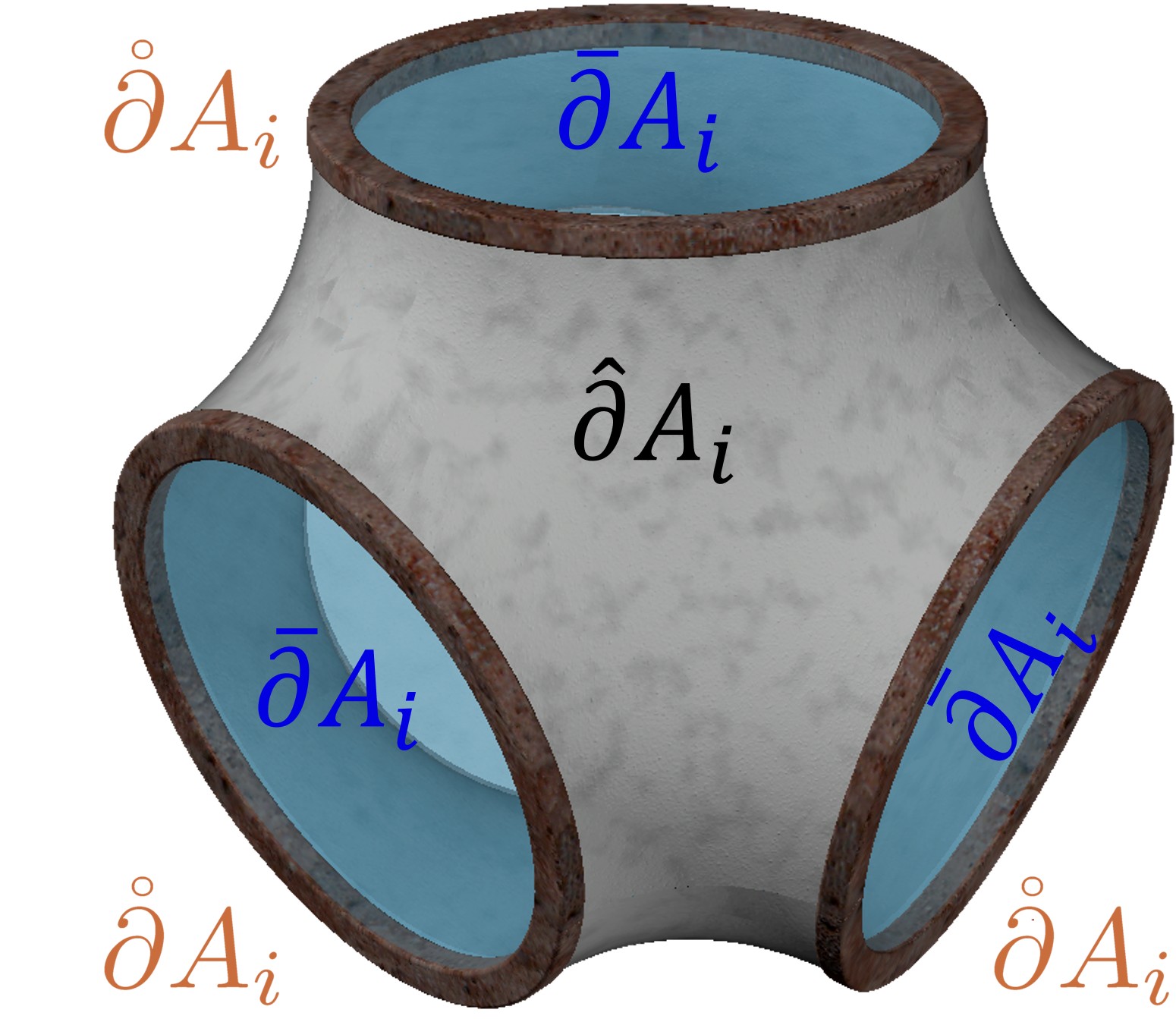}
    \caption{Boundary structure of tubular cell.}
    \label{fig:boundary}
\end{figure}

\begin{rem}[Boundary decomposition of a tubular tile, perfect tilings]
For the boundary $\partial A_i$ of each tile $A_i$, equations~(\ref{eq:cellboundary}) and~(\ref{eq:sepboundary}) define a decomposition into the external and internal interfaces $\hat{\partial}A_i$ and $\bar{\partial}A_i$. Both interfaces are unions of compact $(d-1)$--manifolds, intersecting transversely along smooth compact $(d-2)$--manifolds. The union of these intersections is called the \emph{separation boundary} and is denoted by $\mathring{\partial}A_i$ (see Figure~\ref{fig:boundary}).

We say that a tile is \emph{active} if its external interface is nonempty, and \emph{passive} otherwise. In the latter case the separation boundary is empty. 
A tubular tiling is called \emph{perfect} if all of its tiles are active.
Theorem~\ref{thm:eulerD} remains valid for imperfect tilings.
Nevertheless, many physical systems appear naturally as perfect tubular tilings.
 For an imperfect tubular tiling, one may often obtain a perfect tiling either by amalgamating passive tiles with neighboring tiles of the same label or by relabelling the tiling. The first operation preserves the tubular hypersurface $T$ while modifying the combinatorial structure of the tiling, whereas the second preserves the combinatorial structure while modifying the associated tubular hypersurface.
\end{rem}

\begin{rem}[Induced tilings on the interface hypersurface]
 Definition \ref{defn:tubulartilings} implies that external interfaces of the \emph{active} tiles tile $T$ in two different manners.
Let $\mathcal{T}_M$ be a tubular tiling with associated hypersurface $T \subset M$. Then the external interfaces
\[
A_i \cap B_j \subset T
\]
induce two natural tilings of $T$.
From the $A$-side, $T$ is decomposed as
\[
T = \bigcup_i  \hat \partial A_i= \bigcup_i \left( \bigcup_j A_i \cap B_j \right),
\]
and from the $B$-side as
\[
T = \bigcup_j  \hat \partial B_j=\bigcup_j \left( \bigcup_i A_i \cap B_j \right).
\]
Each of these decompositions defines a normal tiling of $T$ (with respect to the induced $(d-1)$-dimensional geometry), and may be interpreted as the interface structure as seen from the $A$-phase and $B$-phase, respectively.
In particular, codimension-2 strata in $M$ correspond to codimension-1 interfaces within these induced tilings of $T$.
\end{rem}

\begin{rem}[Induced decompositions of the interface hypersurface]
Let $\mathcal{T}_M$ be a tubular tiling with associated hypersurface
$T\subset M^d$. Then the external interfaces
\[
A_i\cap B_j\subset T
\]
define two natural decompositions of $T$.

From the $A$--side we have
\[
T=\bigcup_i \hat\partial A_i
   =\bigcup_i\Bigl(\bigcup_j A_i\cap B_j\Bigr),
\]
and from the $B$--side
\[
T=\bigcup_j \hat\partial B_j
   =\bigcup_j\Bigl(\bigcup_i A_i\cap B_j\Bigr).
\]

If the tubular tiling is canonical, then every tile has nonempty
external interface and these decompositions may be viewed as two
natural tilings of $T$, corresponding to the interface structure seen
from the $A$--phase and the $B$-phase, respectively.
In particular, codimension-$2$ strata of the tubular tiling in $M^d$
appear as codimension-$1$ interfaces in the induced decompositions of
$T$.
\end{rem}

\subsection{Soft shapes and soft cells}
Softness of shapes and tilings was first defined in \cite{softcells1} for $d=2$ and $d=3$ dimensions by prescribing that the number of sharp corners should be minimized.
Here we aim, on one hand, to extend this notion to arbitrary dimensions. On the other hand, we want to connect the concept of softness and the concept of smoothness.
The following definition of softness $\sigma$ measures what is the boundary stratum with highest codimension. If $\sigma=1$ then the boundary is smooth. In $d=3$ dimensions, $\sigma=2$
characterizes the same shapes identified by the definition in \cite{softcells1}. In $d=2$ dimensions the new definition is vacuous and therefore we suggest to keep the old definition
because it selects shapes also observed in nature.

\begin{defn}[Soft shape]\label{defn:soft1}
Let $M^d \subset \mathbb R^{d+1}$ be an embedded, smooth $d$-dimensional manifold without boundary, and let $S \subset M^d$ be a compact, connected, orientable $d$--manifold with piecewise smooth boundary $\partial S$.  

We call $S$ a \emph{$k$-soft shape} (or, alternatively we write $\sigma(S)=k$) if $k$ is the smallest integer such that for every point $p \in \partial S$ there exists a smooth embedded $k$-codimensional submanifold
\[
\gamma \subseteq \partial S
\]
such that $p$ lies in the relative interior of $\gamma$.
\end{defn}

\begin{rem}[Soft cell terminology]\label{rem:softterm}
Regarding the parameter $k$ we make some simple observations:
\begin{itemize}
\item Since $\dim(\partial S)=d-1$, so $k>0$.
\item If $k=1$  then every boundary point $p$ admits a neighborhood in $\partial S$ that is a smooth $(d-1)$-manifold. Hence $\partial S$ is smooth.  In other words, we recognize that being $1$-soft is equivalent to being smooth.
\item For brevity, if $d>2$ then we call $2$-soft shapes simply soft shapes.
\end{itemize}
\end{rem}

\begin{rem}\label{rem:2Dsoft}
For $d=2$, every boundary point of a piecewise smooth curve lies in the interior of a smooth 0-manifold (a point). Hence every planar shape is automatically $2$-soft, so the definition carries no geometric information. Therefore, in dimension two we retain the earlier notion of softness introduced in \cite{softcells1}.
\end{rem}

\begin{defn}[Soft tiling and soft cell]\label{defn:soft2}
Let $M^d \subset \mathbb R^{d+1}$ be a smooth embedded $d$--manifold without boundary, and let $0 < k < d$ be an integer.  
If a tiling $\mathcal{T}_M$ consists of tiles, each of which is a $k$-soft shape, we call $\mathcal{T}_M$ a \emph{$k$-soft tiling}.  
If copies of a single $k$-soft shape tile $M^d$, we call that shape a \emph{$k$-soft cell}.
\end{defn}

\begin{prop}\label{prop:soft}
Let $M^d \subset \mathbb R^{d+1}$ be an embedded, smooth $d$--dimensional manifold without boundary, and let $\mathcal{T}_M$ be a tubular tiling of $M^d$. Then, if $d>2$, the tiling $\mathcal{T}_M$ is a $2$-soft tiling.
\end{prop}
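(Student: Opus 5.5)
The plan is to show that every boundary point $p\in\partial\tau$ of a tile $\tau$ lies in the relative interior of a smooth embedded submanifold of codimension at most $2$ contained in $\partial\tau$. Lower-codimension strata can always be replaced by a codimension-$2$ piece through $p$, since $d-2\ge 1$ when $d>2$. This gives $\sigma(\tau)\le 2$ for every tile. Without loss of generality take $\tau=A_i$, since the roles of $A$ and $B$ in Definition~\ref{defn:tubulartilings} are symmetric.

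\textbf{Step 1 (local multiplicity).} First bound how many interface components can meet at a point $p\in\partial A_i$. By local finiteness, only finitely many tiles contain $p$. The tubularity condition (\ref{eq:triple}) says that at most two $A$-tiles and at most two $B$-tiles contain $p$. So the tiles through $p$ are among $A_i$, possibly one more $A_j$, and at most two $B$-tiles $B_k,B_l$.

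I would then use the tiling condition (\ref{eq:ttling}) to rule out the configuration $\{A_i,A_j,B_k,B_l\}$. Near $p$ the set $T$ is a single smooth embedded hypersurface, so it locally separates a neighbourhood $U$ of $p$ into exactly two sides. The union of the $A$-tiles near $p$ lies on one side of $T$ and the union of the $B$-tiles on the other, because $T$ is exactly the set of $A$–$B$ contacts. The $B$-side is then a half-ball that is itself tiled by $B$-tiles meeting along $B_k\cap B_l$, which is transverse to $T$; the $A$-side behaves the same way. This forces $\partial A_i$ near $p$ to consist of at most two smooth sheets: a piece of $T$, and possibly a piece of $A_i\cap A_j$.

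\textbf{Step 2 (local structure).} With this in hand, I split into cases.
\begin{enumerate}
\item If $p$ lies on only one interface component, then $p$ is interior to a smooth $(d-1)$-manifold in $\partial A_i$.
\item If $p\in\hat\partial A_i\cap\bar\partial A_i=\mathring{\partial}A_i$, then the transversality assumption in Definition~\ref{defn:tubulartilings} applies to the component of $A_i\cap A_j$ and the component of $A_i\cap B_k$ through $p$. Their intersection is therefore a smooth embedded $(d-2)$-manifold containing $p$ in its relative interior, and it lies in $\partial A_i$.
\item If $p$ is on $T$ and meets two $B$-tiles but no other $A$-tile, then $p\in (A_i\cap B_k)\cap(A_i\cap B_l)$, and $\partial A_i$ near $p$ is just $T$. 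In this case $p$ is interior to the smooth sheet $T\cap U\subset\partial A_i$, which has codimension $1$.
\end{enumerate}

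\textbf{Step 3.} Each smooth piece found in Step 2 has codimension $1$ or $2$. Since $d>2$, a codimension-$1$ piece contains a codimension-$2$ smooth submanifold through $p$. So $k=2$ is admissible at every boundary point, and $\sigma(\tau)\le 2$. Since $d>2$, the set $\mathring\partial A_i$, where it is nonempty, shows that smoothness genuinely fails and $k=2$ is attained; passive tiles are $1$-soft, which is still fine under the "at least 2-soft" reading. Compactness, connectedness and orientability of the tiles I would take as implicit in the notion of a tile.

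The main obstacle is Step 1: excluding points where, say, two $A$-tiles and two $B$-tiles all meet. Here one must check that a smooth hypersurface $T$ locally has only two sides, so that the $A$-side cannot be split further in a way that creates a codimension-$3$ corner. The step also requires checking that transversal intersections of more than two interface components through one point cannot occur. I would argue this by counting sheets: each additional sheet requires an additional tile of some label, which (\ref{eq:triple}) forbids.
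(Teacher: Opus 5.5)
Your proposal is correct and matches the paper's proof: bound the faces of $\partial A_i$ through $p$ using the triple-junction condition (\ref{eq:triple}), use transversality to obtain a smooth $(d-2)$-manifold where an internal face meets $T$, and reduce junctions of several external faces to the single smooth sheet $T$. Your local two-sides argument and your remark that tiles with smooth boundary are only $1$-soft are useful details the paper leaves implicit. One small fix: in Step 2(2), when two $B$-tiles also meet at $p$, intersect $A_i\cap A_j$ with $T$ locally rather than with the single component $A_i\cap B_k$, since otherwise $p$ may lie on the boundary of that intersection rather than in its relative interior.
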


\begin{proof}
Let $A_i$ be a tile and let $p \in \partial A_i$.  
Because the tiling is tubular, every interface component is a smooth $(d-1)$--manifold. Moreover, the condition (\ref{eq:triple})
implies that at most two interface components (faces) of $\partial A_i$ can meet at the point $p$.
Since external interfaces tile $T$, each tile $A_i$ has at least one external face $f_i^1 = A_i \cap B_j$, and this face satisfies $f_i^1 \subset T$.

There are two cases:

\smallskip

\emph{(1) $p$ lies on exactly one face.}  
Then $p$ lies on a smooth $(d-1)$--manifold contained in $\partial A_i$, thus $A_i$ is $1$--soft at $p$.
This also implies that $A_i$ is $2$--soft at $p$.

\smallskip

\emph{(2) $p$ lies on two faces.}  
One face is the external face $f_i^1 \subset T$.  
Let the second face be the internal face $f_i^2 = A_i \cap A_k$.  
Both $f_i^1$ and $f_i^2$ are smooth $(d-1)$--manifolds. Their intersection is
\[
f_i^{1,2} = f_i^1 \cap f_i^2 = T \cap (A_i \cap A_k),
\]
which is a smooth $(d-2)$--manifold because it is the transverse intersection of two smooth hypersurfaces in $M^d$.

Thus $p$ lies in the interior of the smooth $(d-2)$--manifold $f_i^{1,2}$, and $A_i$ is $2$--soft at $p$.

\smallskip

In all cases, each boundary point $p \in \partial A_i$ lies in the interior of a smooth $(d-2)$--dimensional submanifold of $\partial A_i$. Hence $A_i$ is a $2$--soft shape.  
Since this holds for each tile, $\mathcal{T}_M$ is a $2$--soft tiling.
\end{proof}

\begin{prop}\label{prop:soft}
Let $M^d \subset \mathbb R^{d+1}$ be an embedded, smooth $d$--dimensional manifold without boundary, and let $\mathcal{T}_M$ be a tubular tiling of $M^d$. Then, if $d>2$, the tiling $\mathcal{T}_M$ is a $2$-soft tiling.
\end{prop}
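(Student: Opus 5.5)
The plan is to argue pointwise on the boundary of an arbitrary tile. By symmetry of the labels it suffices to treat an $A$-tile $A_i$. For $p\in\partial A_i$ we produce a smooth embedded submanifold of $\partial A_i$ of codimension at most two (in $M^d$) that contains $p$ in its relative interior. We then take the minimum over $k$ in Definition~\ref{defn:soft1}.

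\textbf{Step 1: at most two faces meet at $p$.} By the boundary assumption of Definition~\ref{defn:tubulartilings}, $\partial A_i=\bigcup_{j\neq i}(A_i\cap A_j)\cup\bigcup_j(A_i\cap B_j)$. Every such interface component is a smooth compact $(d-1)$-manifold. We show that $p$ lies on at most two of them, and if on two, then on exactly one internal and one external face.
\begin{itemize}
\item Two distinct internal faces $A_i\cap A_j$ and $A_i\cap A_k$ through $p$ would put $p$ in $A_i\cap A_j\cap A_k$, contradicting the triple condition (\ref{eq:triple}).
\item Two distinct external faces $A_i\cap B_j$ and $A_i\cap B_k$ through $p$ would give two sheets of $T$ meeting at $p$. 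Both lie in the smooth embedded hypersurface $T$ by (\ref{eq:ttling}), and we use transversality to get a contradiction. I would argue: transversal intersection of two $(d-1)$-dimensional pieces inside a single smooth $(d-1)$-manifold $T$ is impossible unless they overlap in an open set. An open overlap would in turn make $B_j\cap B_k$ contain points on both sides with $A_i$, hence $A_i\cap B_j\cap B_k\neq\emptyset$, and locally three tiles would fill a neighbourhood. The cleanest route is to note that near $p$, $T$ separates $M$ into two sides, the $A$-side being $A_i$; the $B$-side is covered by the $B_j,B_k$, which must then meet along a face through $p$, and $B_j\cap B_k\subset T$ is impossible since $T$ is an $A/B$ interface.
\end{itemize}
I expect this second bullet to be the main obstacle, since the definition does not state it verbatim. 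If it resists, I would read the separation-boundary structure, $\hat\partial A_i$ a disjoint union of patches of $T$, as implicit in the definition, as the Remark on boundary decomposition suggests.

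\textbf{Step 2: case analysis.}
\begin{itemize}
\item If $p$ lies on exactly one face and not on its boundary, that face is a smooth $(d-1)$-submanifold of $\partial A_i$ with $p$ in its interior. Hence $A_i$ is $1$-soft, so certainly $2$-soft, at $p$.
\item If $p$ lies on two faces, $f^1=A_i\cap B_j\subset T$ and $f^2=A_i\cap A_k$, these meet transversely by hypothesis. So $f^1\cap f^2$ is a smooth $(d-2)$-manifold in $M^d$, contained in $\partial A_i$, with $p$ in its relative interior.
\item A point on the boundary of a single face lies, by the covering property of $\partial A_i$, on another face as well, so it falls in the previous case.
\end{itemize}
The hypothesis $d>2$ ensures $d-2\ge1$, so these are genuine strata and the notion is nonvacuous (cf.\ Remark~\ref{rem:2Dsoft}).

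\textbf{Step 3: conclusion.} Every boundary point therefore lies in the relative interior of a smooth embedded submanifold of codimension at most $2$ in $\partial A_i$. This gives $\sigma(A_i)\le2$, which is $2$-soft in the sense of Definition~\ref{defn:soft2}; when no point of type two exists the tile is in fact $1$-soft. Applying the same argument to every tile shows $\mathcal T_M$ is a $2$-soft tiling.
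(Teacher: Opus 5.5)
Your Step~1 fails at its second bullet. Two distinct external faces $A_i\cap B_j$ and $A_i\cap B_k$ can meet at $p$, and they do so in the paper's own examples. In the double bubble construction of Corollary~\ref{cor:bubble}, a $C$-circle on the $A$-sphere $\partial A_i$ separates faces $A_i\cap B_j$ and $A_i\cap B_k$ lying on opposite sides of a $B$-sphere. Every point of that circle therefore lies on two external faces of $A_i$ and on no internal one; the proof of that corollary states this incidence explicitly.

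Your ``cleanest route'' breaks at its last step. $B_j$ and $B_k$ do meet along a face through $p$, but that face is a $B$-internal face crossing $T$ transversally, with boundary in $\mathring{\partial}B_j\subset T$. It is not contained in $T$, so nothing is contradicted. Condition (\ref{eq:triple}) forbids only $AAA$ and $BBB$ triples, and $p\in A_i\cap B_j\cap B_k$ is an $ABB$ triple.

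The transversality clause of Definition~\ref{defn:tubulartilings} cannot exclude this either. Read literally for two faces lying in $T$, it would also forbid every separation-boundary point. Such a point lies in some $A_i\cap A_k\cap B_j$, hence on the mutually tangent faces $A_i\cap B_j$ and $A_k\cap B_j$. That would rule out your own second case in Step~2. Your fallback of treating distinct external faces as disjoint is an extra hypothesis that these examples violate.

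The omission leaves two kinds of points uncovered.
\begin{itemize}
\item At a point of $\mathring{\partial}B_j\cap\partial A_i$ off $\bar\partial A_i$, your third bullet has nowhere to send $p$, because the second face is external.
\item At a point of $T\cap(A_i\cap A_k)$ that also lies on two external faces, $f^1\cap f^2$ for a single external face $f^1$ is a $(d-2)$-manifold with $p$ on its boundary, not in its relative interior. This happens in the frozen wire construction, where an edge cross-section circle meets the boundary curve of a face disc.
\end{itemize}

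The second of the two printed proofs in the paper handles this. It allows up to two external faces at $p$ (three would give a $BBB$ triple). It then observes that external multiplicity creates no crease, since all external faces are patches of the smooth hypersurface $T$. In your notation, replace $f^1$ by $\hat\partial A_i$:
\begin{itemize}
\item If $p\notin\bar\partial A_i$, then $A_i$ is the only $A$-tile containing $p$. By local finiteness $\partial A_i$ coincides with $T$ near $p$, so $A_i$ is $1$-soft there.
\item If $p\in T\cap(A_i\cap A_k)$, with $k$ unique by (\ref{eq:triple}), the relevant stratum is the component through $p$ of $\mathring{\partial}A_i=T\cap\bar\partial A_i$. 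This is a closed smooth $(d-2)$-manifold by transversality of the internal face with $T$.
\end{itemize}
The first printed version of the proof makes the same ``at most two faces'' claim as you do and has the same omission.
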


\begin{proof}
Let $A_i$ be a tile and let $p \in \partial A_i$.  
Because the tiling is tubular, every interface component is a smooth $(d-1)$--manifold. Moreover, the condition (\ref{eq:triple})
implies that at $n\leq 3$ interface components (faces) of $\partial A_i$ can meet at the point $p$: since two internal faces already force an $AAA$-violation 
of (\ref{eq:triple}), regardless of any other faces present, and three external faces force a $BBB$-violation, any configuration with $n\ge4$ automatically contains one of these forbidden patterns.
As we can see, external multiplicity never creates a genuine crease, so only internal-face count and the internal/external mix matter. 
If we have $n\leq 3$ then only the following combinations are possible:
\begin{enumerate}
\item $n=1$:
\begin{enumerate}
\item  $p$ is an interior point of one internal face  of $A_i$. Then $p$ lies on a smooth $(d-1)$--manifold contained in $\partial A_i$, thus $A_i$ is $1$--soft at $p$.
This also implies that $A_i$ is $2$--soft at $p$.
\item  $p$ is an interior point of one external face  of $A_i$. Argument is the same as above.
\end{enumerate}
\item $n=2$: 
\begin{enumerate}
\item $p$ is on the boundary between an external and an internal face of $A_i$. Let $f_i^1 \subset T$ be the external face and $f_i^2 = A_i \cap A_k$ be the internal face.
Both $f_i^1$ and $f_i^2$ are smooth $(d-1)$--manifolds. Their intersection is
\[
f_i^{1,2} = f_i^1 \cap f_i^2 = T \cap (A_i \cap A_k),
\]
which is a smooth $(d-2)$--manifold because it is the transverse intersection of two smooth hypersurfaces in $M^d$.
Thus $p$ lies in the interior of the smooth $(d-2)$--manifold $f_i^{1,2}$, and $A_i$ is $2$--soft at $p$.
\item $p$ is on the boundary between two internal faces.  This case can be excluded because it would violate the triple-junction criterion (\ref{eq:triple}).
\item $p$ is on the boundary between two external faces. Since both external faces are patches of the smooth tubular surface $T$, this case reduces to case (1)(b).
\end{enumerate}
\item $n=3$
\begin{enumerate}
\item $p$ is at the intersection of three internal faces of $A_i$. This case can be excluded because it would violate the triple-junction criterion (\ref{eq:triple})
\item $p$ is at the intersection of three external faces of $A_i$.This case can be excluded because it would violate the triple-junction criterion (\ref{eq:triple})
\item $p$ is at the intersection of one external face and two internal faces of $A_i$. This case can be excluded because it would violate the triple-junction criterion (\ref{eq:triple})
\item $p$ is at the intersection of one internal face and two external faces of $A_i$. Since both external faces are patches of the smooth tubular surface $T$, this case reduces to case (2)(a).
\end{enumerate}
\end{enumerate}

In all cases, each boundary point $p \in \partial A_i$ lies in the interior of a smooth $(d-2)$--dimensional submanifold of $\partial A_i$. Hence $A_i$ is a $2$--soft shape.  
Since this holds for each tile, $\mathcal{T}_M$ is a $2$--soft tiling.
\end{proof}

\begin{rem}
The restriction $d>2$ is imposed not because the statement fails for $d=2$, 
but because in that case the notion of $2$--softness becomes vacuous: 
a smooth $0$--manifold is just a point, so every boundary point already satisfies the condition, 
and the proposition carries no geometric content.
If we regard points as smooth manifolds then all polygonal tilings
will be classified as soft tilings and for this reason, the case $d=2$ is excluded from the present definition and
will be treated separately. A slightly different approach, motivated by natural examples,
aims to minimize the number of geometric corners. In $d\geq 3$ dimensions any soft tiling satisfies this
condition because the number of corners is zero. For the 2D case, in \cite{twovertex} it was proven that 
in a normal, balanced  tiling of $\mathbb R^2$ the minimal number
for the average number of geometric corners is 2. Planar tilings with this property are also called soft.
\end{rem}

\begin{rem}
Tubular soft tiles $A_i$ satisfy a strengthened softness property that is not
required by Definition~\ref{defn:soft1}.
Not only is the boundary $\partial A_i$ a closed, piecewise smooth $(d-1)$--manifold,
but the $(d-2)$--dimensional separation boundary
$\mathring{\partial} A_i$
is a \emph{disjoint union} of closed, smooth manifolds.
In order to satisfy Definition~\ref{defn:soft1}$,$ this is not necessary:
the smooth $(d-2)$--manifolds defining the separation boundary could, in principle,
have transverse or tangential intersections. In \cite{twovertex} and  \cite{softcells1} such soft cells with non-disjoint
separation boundaries were presented.
\end{rem}

\begin{rem}
 A \emph{tubular soft cell} is a $d$-dimensional compact object $S$ such that
identical copies of $S$ fill $M^d$  without gaps and overlaps and the boundary $\partial S$ consists of $n$ disjoint, $d-1$ domains with smooth, $(d-2)$ dimensional boundaries and 
$m$ disjoint, smooth, $(d-1)$-dimensional patches of a smooth manifold. It is important to recognize that these properties are sufficient to guarantee that $S$ is a soft cell but they are not sufficient
to guarantee that $S$ is a tubular soft cell. For the latter, the global existence of a smooth, $(d-1)$ dimensional separating surface is also required.
The soft cell illustrated in Figure \ref{fig:0}(b) has all the listed properties but does not belong to a tubular tiling.
\end{rem}

\section{Proof of the Euler balance theorem (Theorem \ref{thm:eulerD})}\label{sec:thm_euler}
\begin{proof} (Theorem \ref{thm:eulerD})
We first introduce the finite volume setup for noncompact manifolds
then write general equations (independent of the parity of the dimension) and finally
we separate the odd and even dimensional case.

\subsection{Finite volume setup}\label{ss:setup}
Let $M^d \subset \mathbb R^{d+1}$ be an embedded, smooth $d$--dimensional manifold without boundary, and let $\mathcal{T}_M$ be a tubular tiling of $M^d$ with tiles $\tau_i$. 
Let $T\subset M^d$ be an embedded hypersurface associated with 
$\mathcal T$.
For $R>0$, let $\mathcal B_R(P)\subset M^d$ denote the closed, intrinsic ball of radius $R$ with center $P \in M^d$:
\begin{equation}\label{eq:ball}
B_R(P)=\{ x \in M^d: d(P,x) <R\},
\end{equation}
where $d(P,x)$ is the geodesic distance between $x$ and $P$.

We define the truncated surface
\[
T_R := T\cap \mathcal B_R.
\]
Then we always have
\begin{equation}\label{eq:ball_limit}
\lim _{R \to \infty} \mathcal B_R(P)=\bigcup _{R>0}\mathcal B_R(P)=M^d.
\end{equation}
\begin{rem}\label{rem:compact} If $M^d$ is compact, then there exists a finite radius $R=R_0$
(the intrinsic diameter from $P$) such that 
\[
\forall R > R_0: \quad \mathcal B_R(P)=M^d,
\]
so the limit process
described in equation (\ref{eq:ball_limit}) saturates already at some finite radius $R_0$.
If $M^d$ is noncompact then there is no finite saturation radius $R_0$, the union over all $R>0$
still exhausts the manifold $M^d$.

Thus, equation (\ref{eq:ball_limit}) remains valid both in the compact and the non-compact case.
Henceforth we write equations which are formally valid in both cases but we keep in mind that
the limit process is slightly different. In the case of compact $M^d$ we will assume
that $R>R_0$.
\end{rem}

We denote by $N_A(R)$ and $N_B(R)$ the numbers of $A$-tiles and $B$-tiles,
respectively, that are entirely contained in $\mathcal B_R$ and we write
\[
N(R):=N_A(R)+N_B(R).
\]
We also define the finite portions of the partitions $A$ and $B$, contained in $\mathcal B_R$ as
\[
\begin{aligned}
A_R &:= (A\cup \partial A)\cap \mathcal B_R=(A\cup T)\cap \mathcal B_R,\\
B_R &:= (B\cup \partial B)\cap \mathcal B_R=(B\cup T)\cap \mathcal B_R,
\end{aligned}
\]
Then $A_R$, $B_R$, and $T_R$ are compact, and satisfy
\begin{equation}\label{eq:finitevolume}
A_R\cup B_R=\mathcal B_R,
\qquad
A_R\cap B_R=T_R, 
\end{equation} and for the $R \to \infty$ limit we can write
$$
A \bigcup B = M^d, \qquad  A \bigcap B = T.
$$
The relative frequencies of $A$- and $B$-tiles can be obtained as
\begin{equation}\label{eq:rf}
p_A(R)=\frac{N_A(R)}{N(R)}, \, p_B(R)=\frac{N_B(R)}{N(R)}.
\end{equation}
We introduce shorthand notations for the Euler characteristics for the tiles and their boundaries as $$\chi_{\mathbf{A},i}=\chi(A_i), \quad\chi_{A,i}=\chi(\partial A_i), $$
and the components of the boundary as
$$\chi_{\hat{A},i}=\chi(\hat{\partial} A_i),  \,
\chi_{\bar{A},i}=\chi(\bar{\partial} A_i), \, \chi_{\mathring{A},i}=\chi(\mathring{\partial}A_i)$$
with corresponding averages 
\begin{equation}\label{eq:tub_eu0}
\chi_{\mathbf{A}}(R)=\frac{1}{N_A(R)}\sum_{i=1}^{N_A(R)}\chi(A_i), \quad \chi_{A}(R)=\frac{1}{N_A(R)}\sum_{i=1}^{N_A(R)}\chi(\partial A_i),
\end{equation}
\begin{equation}\label{eq:tub_eu1}
\chi_{\bar{A}}(R)=\frac{1}{N_A(R)}\sum_{i=1}^{N_A(R)}\chi_{\bar{A},i}, \quad \chi_{\hat{A}}(R)=\frac{1}{N_A(R)}\sum_{i=1}^{N_A(R)}\chi_{\hat{A},i}, \quad \chi_{\mathring{A}}(R)=\frac{1}{N_A(R)}\sum_{i=1}^{N_A(R)}\chi_{\mathring{A},i}.
\end{equation}
For the $B$ partition the notations and averages are analogous.

\subsection{Limits}\label{ss:limits}

By the assumptions of Theorem \ref{thm:eulerD}, as $R \to \infty$, the averages in (\ref{eq:rf}), (\ref{eq:tub_eu0}), (\ref{eq:tub_eu1}) approach respective finite limits.
As noted in Remark \ref{rem:compact}, this limit is already saturated at finite $R=R_0$ if $M^d$ is compact and in this case $N_A, N_B, N$
remain finite. In the noncompact case  $N_A(R), N_B(R)$ and $N(R)$ will approach infinity and boundary contributions coming from $\partial B_R$
disappear after normalization by $N(R)$.  We can write formally both for the compact and the non-compact case:
\[
p_A:=\lim_{R\to\infty}p_A(R), \quad p_B:=\lim_{R\to\infty}p_B(R), 
\]
\[
\chi_{\mathbf{A}}:=\lim_{R\to\infty}\chi_{\mathbf{A}}(R), \quad \chi_A:=\lim_{R\to\infty}\chi_{A}(R), 
\]
\[
\chi_{\bar{A}}:=\lim_{R\to\infty}\chi_{\bar{A}}(R), \quad \chi_{\hat{A}}:=\lim_{R\to\infty}\chi_{\hat{A}}(R), \quad  \chi_{\mathring{A}}:=\lim_{R\to\infty}\chi_{\mathring{A}}(R).
\]
These limits define the  averages associated with $A$-partition of the tubular tiling $\mathcal{T}_M$ with analogous limits for the $B$--partition.

\subsection{General equations}

Using (\ref{eq:finitevolume}), we can directly apply (\ref{eq:add}) to obtain
\begin{equation}\label{eq:IE}
\chi(\mathcal B_R)=\chi(A_R)+\chi(B_R)-\chi(T_R).
\end{equation}
Our goal is to obtain the three terms on the right hand side of (\ref{eq:IE})
as functions of of the averages (\ref{eq:tub_eu0})-(\ref{eq:tub_eu1}).

\subsubsection{Cell boundary equations}
Both the $A$-type and the $B$-type  external interfaces $\hat \partial A_i, \hat \partial B_j$  tile $T_R$, so we have 
\begin{equation}\label{eq:bd0c}
T_R=\bigcup_{i=1}^{N_A(R)} \hat\partial A_i = \bigcup_{j=1}^{N_B(R)} \hat \partial B_j. 
\end{equation}
The \emph{edges} of the above tilings, along which the smooth patches (external interfaces)  $\hat{\partial} A_i, \hat{\partial} B_j$ are glued to each other, are the separation boundaries
$\mathring{\partial} A_i, \mathring{\partial} B_j$. 
Considering this fact, and using equations (\ref{eq:triple}) and (\ref{eq:bd0c}), we can compute the Euler characteristic for the union of the smooth patches $\hat \partial A_i$ under the exclusion-inclusion principle (\ref{eq:add}) as
\begin{equation}\label{eq:euler_hatc}
\chi(T_R)=\chi\left(\bigcup_{i=1}^{N_A(R)}\hat{\partial} A_i\right) =\sum_{i=1}^{N_A(R)}\chi(\hat{\partial} A_i )-\chi\left(\bigcup_{i=1}^{N_A(R)} \mathring{\partial}A_i\right).
\end{equation}
Now we observe that each connected component of the boundary $\bigcup \mathring{\partial}A_i$ is a closed smooth $(d-2)$--manifold and, because of the condition (\ref{eq:triple})
appears, in the $R \to \infty$ limit as a boundary component of exactly two tubular soft tiles. Therefore,  in the $R \to \infty$ limit, $\bigcup \mathring{\partial}A_i$ is covered exactly twice by the family
$\{\mathring{\partial}A_i\}$.
By the inclusion--exclusion principle (\ref{eq:add}) for Euler characteristic, at finite values of $R$ we can write
\begin{equation}\label{eq:euler_spbc}
\chi\left(\bigcup_{i=1}^{N_A(R)} \mathring{\partial}A_i\right)=\frac12 \sum_{i=1}^{N_A(R)} \chi(\mathring{\partial}A_i) + \epsilon_{\mathring A}(R),
\end{equation}
where $\epsilon_{\mathring A}(R)$ is a boundary error term.
\begin{rem}\label{rem:boundary_error}
The error term  $\epsilon_{\mathring A}(R)$ (and the analogous error term $\epsilon_{\mathring B}(R)$ for the $B$ partition) only arises in the noncompact case
and we  discuss  them subsection \ref{ss:uni}.
\end{rem}

Plugging (\ref{eq:euler_spbc}) and (\ref{eq:euler_hatc}) into (\ref{eq:bd0c}) yields:
\begin{equation}\label{eq:euler_T1c}
\chi(T_R) =\sum_{i=1}^{N_A(R)}\chi(\hat{\partial} A_i )-\frac12 \sum_{i=1}^{N_A(R)} \chi(\mathring{\partial}A_i)-\epsilon_{\mathring A}(R).
\end{equation}
Next we rewrite (\ref{eq:euler_spbc}) for the averages as
\begin{equation}\label{eq:euler_T2c}
\chi(T_R)=N_A(R)\left(\chi_{\hat A}-\frac12 \chi_{\mathring A}\right)-\epsilon_{\mathring A}(R).
\end{equation}
In the next step we write the Euler characteristics for the boundary of the individual cell.
Using (\ref{eq:add}), (\ref{eq:cellboundary}) and(\ref{eq:sepboundary}), we can write
\begin{equation} \label{eq:sbd0c}
\chi(\partial A_i)=\chi(\hat \partial A_i)+\chi(\bar \partial A_i)-\chi(\mathring{\partial} A_i).
\end{equation}
Rearranging (\ref{eq:sbd0c}) and replacing individual Euler characteristics by their respective averages yields
\begin{equation} \label{eq:sbdc}
\chi_{\hat A}=\chi_A-\chi_{\bar A}+\chi_{\mathring A}.
\end{equation}
Now we plug (\ref{eq:sbdc}) into (\ref{eq:euler_T2c}): 
\begin{equation}\label{eq:euler_T3c}
\chi(T_R)=N_A(R)\left((\chi_A-\chi_{\bar A}+\chi_{\mathring A})-\frac12 \chi_{\mathring A}\right)-\epsilon_{\mathring A}(R).
\end{equation}
After rearranging terms and with analogous considerations for the $B$ partition, using (\ref{eq:bd0c}), we  obtain
\begin{equation}\label{eq:euler_T3cc}
\chi(T_R)=N_A(R)\left(\chi_A-\chi_{\bar A}+\frac{1}{2}\chi_{\mathring A}\right)-\epsilon_{\mathring A}(R)=N_B(R)\left(\chi_B-\chi_{\bar B}+\frac{1}{2}\chi_{\mathring B}\right)-\epsilon_{\mathring B}(R).
\end{equation}

\subsubsection{Solid cell equations}
Next we consider the $d$-dimensional cells $A_i,B_j$ themselves for which we have
\begin{equation}\label{eq:tilingr3c}
  A_R = \bigcup_{i=1}^{N_A(R)}  A_i, \, B_R=\bigcup_{j=1}^{N_B(R)}  B_j.
\end{equation} In each partition, they are glued on the internal interfaces $\bar \partial A_i, \bar \partial B_j$.
Analogously to (\ref{eq:euler_hatc}), for the union of the cells (tiling $A$) we can write:
\begin{equation}\label{eq:euler_A1c}
\chi(A_R)=\chi\left(\bigcup_{i=1}^{N_A(R)}A_i\right)= \sum_{i=1}^{N_A(R)}\chi(A_i)-\chi\left(\bigcup_{i=1}^{N_A(R)}\bar{\partial}A_i\right).
\end{equation}
Since,  based on condition (\ref{eq:triple}), in the $R \to \infty$ limit, each internal interface $\bar \partial A_i$ is counted exactly twice in the union, under the exclusion-inclusion principle we write for finite $R$:
\begin{equation}\label{eq:euler_A2c}
\chi(A_R)= \sum_{i=1}^{N_A(R)}\chi(A_i)-\frac{1}{2}\sum_{i=1}^{N_A(R)}\chi(\bar \partial A_i)- \epsilon_{\bar A}(R).
\end{equation}
where $\epsilon_{\mathring A}(R)$ is a boundary error term and the Euler characteristic for the $B$ partition can be written in an analogous manner.
\begin{rem}\label{rem:solid_error}
The error term  $\epsilon_{\bar A}(R)$ (and the analogous error term $\epsilon_{\bar B}(R)$ for the $B$ partition) only arises in the noncompact case and
we  discuss them in subsection \ref{ss:uni}.
\end{rem}
Rewriting (\ref{eq:euler_A2c}) for the averages and writing the analogous formula for the $B$ partition we get
\begin{equation}\label{eq:euler_A3c}
\chi(A_R)= N_A(R)\left( \chi_{\mathbf A}-\frac{1}{2}\chi_{\bar A}\right)- \epsilon_{\bar A}(R), \quad \chi(B_R)= N_B(R)\left( \chi_{\mathbf B}-\frac{1}{2}\chi_{\bar B}\right)- \epsilon_{\bar B}(R)
\end{equation}

\subsubsection{Unified equation}\label{ss:uni}
In the next step we plug  (\ref{eq:euler_T3cc}) and (\ref{eq:euler_A3c}) into (\ref{eq:IE}) and divide by $N(R)=(N_A(R)+N_B(R))$:
\begin{equation}\label{eq:u10}
\frac{\chi(\mathcal{B}_R)}{N(R)}=p_A(R)\left( \chi_{\mathbf A}-\frac{1}{2}\chi_{\bar A}\right)- \frac{\epsilon_{\bar A}(R)}{N(R)}+p_B(R)\left( \chi_{\mathbf B}-\frac{1}{2}\chi_{\bar B}\right)- \frac{\epsilon_{\bar B}(R)}{N(R)}-p_A(R)\left(\chi_A-\chi_{\bar A}+\frac{1}{2}\chi_{\mathring A}\right)+ \frac{\epsilon_{\mathring A}(R)}{N(R)}.
\end{equation}
\begin{equation}\label{eq:u20}
\frac{\chi(\mathcal{B}_R)}{N(R)}=p_A(R)\left( \chi_{\mathbf A}-\frac{1}{2}\chi_{\bar A}\right)- \frac{\epsilon_{\bar A}(R)}{N(R)}+p_B(R)\left( \chi_{\mathbf B}-\frac{1}{2}\chi_{\bar B}\right)- \frac{\epsilon_{\bar B}(R)}{N(R)}-p_B(R)\left(\chi_B-\chi_{\bar B}+\frac{1}{2}\chi_{\mathring B}\right)+ \frac{\epsilon_{\mathring B}(R)}{N(R)}.
\end{equation}

Our goal is to take the $R \to \infty$ limit for (\ref{eq:u10})-(\ref{eq:u20}).
Using equation (\ref{eq:ball_limit}) and Remark \ref{rem:compact}, we can replace the left hand side:
\begin{equation}\label{eq:lhs}
\lim_{R \to \infty}\frac{\chi(\mathcal{B}_R)}{N(R)}=\frac{\chi( M^d)}{N},
\end{equation}
noting that $N$ will be infinite if $M^d$ is noncompact, so in if $M^d$ is noncompact then this term will vanish.

Next we compute the error terms.
In the compact case, as mentioned in Remarks \ref{rem:boundary_error} and \ref{rem:solid_error}, there are no error terms.
In the noncompact case, since the size of the tiles is uniformly bounded from below and from above, the number of tiles on the boundary $N_{boundary}(R)$ of $\mathcal B_R$ is growing with $R^{d-1}$
while the number of tiles $N_R$ contained in $\mathcal B_R$ is growing with $R^d$. Since the all errors $\epsilon_{\mathring A}(R), \epsilon_{\mathring B}(R),$$\epsilon_{\bar A}(R), \epsilon_{\bar B}(R)$ 
are proportional to $N_{boundary}(R)$, we can write
\begin{equation}\label{eq:err}
\lim_{R \to \infty}\frac{\epsilon_{\mathring A}(R)}{N(R)}=\lim_{R \to \infty}\frac{\epsilon_{\mathring B}(R)}{N(R)}=\lim_{R \to \infty}\frac{\epsilon_{\bar A}(R)}{N(R)}=\lim_{R \to \infty}\frac{\epsilon_{\bar B}(R)}{N(R)}=0.
\end{equation}
Using the notation from Subsection \ref{ss:limits}, plugging (\ref{eq:lhs}) and (\ref{eq:err}) into  (\ref{eq:u10})-(\ref{eq:u20}) and swapping the right hand and left hand sides we get:
\begin{equation}\label{eq:u1}
p_A\left( \chi_{\mathbf A}-\frac{1}{2}\chi_{\bar A}\right)+p_B\left( \chi_{\mathbf B}-\frac{1}{2}\chi_{\bar B}\right)-p_A\left(\chi_A-\chi_{\bar A}+\frac{1}{2}\chi_{\mathring A}\right)=\frac{\chi( M^d)}{N}
\end{equation}
\begin{equation}\label{eq:u2}
p_A\left( \chi_{\mathbf A}-\frac{1}{2}\chi_{\bar A}\right)+p_B\left( \chi_{\mathbf B}-\frac{1}{2}\chi_{\bar B}\right)-p_B\left(\chi_B-\chi_{\bar B}+\frac{1}{2}\chi_{\mathring B}\right)=\frac{\chi( M^d)}{N}.
\end{equation}

If we take the sum or the difference of (\ref{eq:u1}) and (\ref{eq:u2}),  we get, respectively
\begin{eqnarray}\label{eq:u_sum}
p_A\left( 2\chi_{\mathbf A}-\chi_A-\frac{1}{2}\chi_{\mathring A}\right)+p_B\left(2 \chi_{\mathbf B}-\chi_B-\frac{1}{2}\chi_{\mathring B}\right)  & = & \frac{2\chi( M^d)}{N}. \\
\label{eq:u_diff}
p_A\left( \chi_A-\chi_{\bar A}+\frac{1}{2}\chi_{\mathring A}\right)-p_B\left( \chi_B-\chi_{\bar B}+\frac{1}{2}\chi_{\mathring B}\right) & = & 0.
\end{eqnarray}

\subsection{Odd dimensions}
If $d=2k+1$ is an odd number, then 
\begin{equation}\label{eq:odd1}
\chi_{\mathring A}=\chi_{\mathring B}=0,
\end{equation} because the separation boundary is the disjoint union of closed, $(d-2)$-dimensional manifolds. 
\begin{lem}
If $d$ is odd then 
\begin{equation}\label{eq:rhs}
\frac{2\chi( M^d)}{N}=0.
\end{equation}
\end{lem}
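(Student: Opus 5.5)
The plan is to split into the two cases singled out in Remark~\ref{rem:compact}, according to whether $M^d$ is compact or noncompact. In both cases the quantity $\chi(M^d)/N$ is read as the limit in (\ref{eq:lhs}).

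\emph{Noncompact case.} Here no parity argument is needed. The tiles contain balls of radius $r_-$ and are contained in balls of radius $r_+$, so $N(R)\to\infty$ as $R\to\infty$. We have $\chi(\mathcal B_R)=\chi(A_R)+\chi(B_R)-\chi(T_R)$ by (\ref{eq:IE}), and each term on the right is controlled by (\ref{eq:euler_T3cc}) and (\ref{eq:euler_A3c}). Using the bounded-geometry and local-finiteness assumptions of Definition~\ref{defn:tubulartilings}, I would bound $|\chi(\mathcal B_R)|$ by $O(N_{boundary}(R))+o(N(R))$, which is the same scaling argument already used for (\ref{eq:err}). It then follows that $\chi(\mathcal B_R)/N(R)\to 0$, i.e.\ $2\chi(M^d)/N=0$ in the sense of (\ref{eq:lhs}).

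\emph{Compact case.} Here $N$ is finite and $\mathcal B_R=M^d$ for $R>R_0$, so it suffices to show $\chi(M^d)=0$ whenever $d$ is odd. $M^d$ is a closed smooth manifold; it is orientable as a compact embedded hypersurface of $\mathbb R^{d+1}$, although this is not actually needed. The plan is to invoke Poincaré duality with $\mathbb Z/2$ coefficients, which gives $b_k=b_{d-k}$ for the mod-2 Betti numbers. Then
\[
\chi(M^d)=\sum_{k=0}^d(-1)^k b_k=\sum_{k=0}^d(-1)^{d-k}b_{k}=(-1)^d\chi(M^d),
\]
so for odd $d$ we obtain $\chi(M^d)=-\chi(M^d)$, hence $\chi(M^d)=0$. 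As an alternative route, I could take a nowhere-vanishing vector field and apply Poincaré--Hopf. Either way the right-hand side of (\ref{eq:rhs}) vanishes.

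\emph{Expected obstacle.} The compact case is standard. The delicate point is the noncompact case, because $\chi(\mathcal B_R)$ itself need not be small: geodesic balls in a manifold of finite topological type can have complicated topology. My first attempt would be to use finite topological type. Outside a compact set $K$, the manifold $M^d$ is a product $\partial K\times[0,\infty)$, so for large $R$ the value $\chi(\mathcal B_R)$ should stabilize up to a correction supported near $\partial\mathcal B_R$. That correction has size $O(N_{boundary}(R))=O(R^{d-1})$, while $N(R)\sim R^d$. If instead one prefers to work with $\chi(M^d)$ directly rather than the truncated ball, note that it is finite by the finite topological type assumption, so dividing by $N=\infty$ again gives zero.
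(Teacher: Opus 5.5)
Your compact case and the closing sentence of your proposal together are the paper's proof, and they suffice. For closed odd-dimensional $M^d$ one has $\chi(M^d)=0$; the paper just quotes this, and your mod-$2$ Poincar\'e duality computation is a correct justification. For noncompact $M^d$, the numerator $\chi(M^d)$ is finite by finite topological type while $N=\infty$. The lemma concerns this literal quotient, so that one-line argument should be your main noncompact argument, not a fallback. (The Poincar\'e--Hopf alternative needs a nowhere-vanishing field obtained without already knowing $\chi=0$. Here $J\nu$ works, with $\nu$ the unit normal of the orientable hypersurface $M^d\subset\mathbb R^{d+1}$ and $J$ a complex structure on $\mathbb R^{d+1}$, which exists because $d+1$ is even.)

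Your primary noncompact route, which reads the quotient as $\lim_{R\to\infty}\chi(\mathcal B_R)/N(R)$, has a gap as written. Substituting (\ref{eq:euler_T3cc}) and (\ref{eq:euler_A3c}) into (\ref{eq:IE}) writes $\chi(\mathcal B_R)$ as a bulk term of order $N(R)$ plus the boundary errors. The scaling argument behind (\ref{eq:err}) only handles the errors, so controlling each term separately gives $O(N(R))$, not $o(N(R))$. If you use only the $A$-side expression for $\chi(T_R)$, the bulk is $N(R)$ times the left-hand side of (\ref{eq:u1}) at finite $R$. For odd $d$, requiring it to vanish is exactly the balance law (\ref{eq:odd}), so this is circular.

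The route can be repaired, but only through parity, contrary to your remark that none is needed. Write $2\chi(\mathcal B_R)$ using the $A$-side expression for one copy of $\chi(T_R)$ and the $B$-side expression for the other. The bulk then becomes $N_A(R)\bigl(2\chi_{\mathbf A}-\chi_A-\tfrac12\chi_{\mathring A}\bigr)+N_B(R)\bigl(2\chi_{\mathbf B}-\chi_B-\tfrac12\chi_{\mathring B}\bigr)$, the finite-$R$ version of (\ref{eq:u_sum}). For odd $d$ this vanishes tile by tile, because $\chi(\partial A_i)=2\chi(A_i)$ and $\chi(\mathring\partial A_i)=0$. That yields the stronger limit statement tacitly used in (\ref{eq:lhs}), but the lemma itself does not need it. Likewise your product-end heuristic, whose $O(N_{boundary}(R))$ bound is asserted rather than proved, can simply be dropped.
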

\begin{proof}
If $M^d$ is compact and $d$ is odd, then we have 
$\chi(M^d)=0$, so the claim is proven. If $M^d$ is noncompact then $N=\infty$.
Since $M^d$ is of finite topological type, $\chi(M^d)$ is finite,
so the claim is also proven for the noncompact case.
\end{proof}
If $d$ is odd, then we can also use 
\begin{equation}\label{eq:odd2}
\chi _A = 2\chi_{\mathbf A}, \quad \chi_B=2\chi_{\mathbf B}.
\end{equation}
Plugging (\ref{eq:odd1}), (\ref{eq:rhs}) and (\ref{eq:odd2}) into (\ref{eq:u_sum}) we get an identity.  Equation (\ref{eq:u_diff}) can be written as
\begin{equation}\label{eq:odd}
p_A\left(2\chi_{\mathbf A}-\chi_{\bar A}\right)-p_B\left(2\chi_{\mathbf B}-\chi_{\bar B}\right)=0.
\end{equation}
\subsection{Even dimensions}
If $d=2k$ is an even number then $\partial A_i, \partial B_j$ are odd-dimensional closed manifolds, so we have 
\begin{equation}\label{eq:even1}
\chi_A=\chi_B=0.
\end{equation} We also observe that the internal interfaces $\bar \partial A_i, \bar \partial B_j$
are compact, odd-dimensional manifolds the respective boundaries of which is the separation boundary $\mathring \partial A$, so we can write 
\begin{equation}\label{eq:even2}
\chi _{\bar A}=\frac{1}{2}\chi_{\mathring A}, \quad \chi _{\bar B}=\frac{1}{2}\chi_{\mathring B}.
\end{equation}
Plugging (\ref{eq:even1}) and (\ref{eq:even2}) into (\ref{eq:u_diff}) yields an identity and (\ref{eq:u_sum}) can be written as
\begin{equation}\label{eq:even}
p_A\left( 2\chi_{\mathbf A}-\chi_{\bar A}\right)+p_B\left( 2\chi_{\mathbf B}-\chi_{\bar B}\right)=\frac{2\chi( M^d)}{N}.
\end{equation}
\subsection{Completing the proof}
Equation (\ref{eq:odd}) proves  Theorem \ref{thm:eulerD} for odd values of $d$ while equation (\ref{eq:even}) proves Theorem \ref{thm:eulerD} for even values of $d$. This completes  the proof of Theorem \ref{thm:eulerD}. 
\end{proof}

\section{Substitution table for the examples}
\begin{table}[ht!]
\begin{center}
\begin{tabular}{|c|c|c||c|c|c || c | c || c | c || c | c |}
\hline
 \# & Name         & Fig                          & $d$ & $M^d$          & $\chi(M^d)$ & $p_A$           & $p_B$           & $\chi_{\mathbf{A}}$ & $\chi_{\mathbf{B}}$ & $\chi_{\bar A}$ & $\chi_{\bar B}$ \\
\hline
\hline
1   & Schwarz P   & \ref{fig:polyhedral}(a)       & $3$ & $\mathbb{R}^3$ & $1$         & $\frac{1}{2}$   & $\frac{1}{2}$   & $1$                 & $1$                 & $6$             & $6$             \\
\hline
2   & Schwarz D   & \ref{fig:tub}(b)              & $3$ & $\mathbb{R}^3$ & $1$         & $\frac{1}{2}$   & $\frac{1}{2}$   & $1$                 & $1$                 & $4$             & $4$             \\
\hline
3   & Gyroid      & \ref{fig:0}(f)                & $3$ & $\mathbb{R}^3$ & $1$         & $\frac{1}{2}$   & $\frac{1}{2}$   & $1$                 & $1$                 & $3$             & $3$             \\
\hline
\hline
4   & Fermi Cu    & \ref{fig:fermi}               & $3$ & $\mathbb{R}^3$ & $1$         & $\frac{1}{2}$   & $\frac{1}{2}$   & $1$                 & $0$                 & $8$             & $6$             \\
\hline 
5   & Honeycomb   & \ref{fig:2Dpatterns}(a)       & $2$ & $T^2$          & $0$         & $\frac{n}{n+1}$ & $\frac{1}{n+1}$  & $1$                 & $-n$                & $0$             & $0$             \\
\hline
6   & Pollen      & \ref{fig:3Dpatterns}(b)       & $2$ & $S^2$          & $2$         & $\frac{n}{n+1}$ & $\frac{1}{n+1}$  & $1$                 & $2-n$               & $0$             & $0$             \\
\hline
7   & RW          & \ref{fig:robertson}           & $3$ & $S^3$          & $0$         & $\frac{n}{n+m+1}$ & $\frac{m+1}{n+m+1}$ & $1$            & $\frac{2m+n-N}{m+1}$ & $0$             & $\frac{2m(2-k)}{m+1}$        \\
\hline
\end{tabular}
\caption{Substitution table for the examples illustrated in the article.}\label{tab:fermi}
\end{center}
\end{table}

\section*{Acknowledgement}
The author is indebted to Geg\H o Alm\'adi who
geometrically constructed the toroidal $B$-cell of the Fermi Copper surface and to L\'aszl\'o Strommer who designed all figures.  This research was supported by NKFIH grant K149429.
\bibliographystyle{unsrt}
\bibliography{soft, soft1, soft2}

\end{document}